\documentclass[conference]{IEEEtran}

\usepackage{algorithm}
\usepackage{algpseudocode}
\usepackage{mdframed}
\usepackage{upgreek}
\usepackage{booktabs}
\usepackage{amsmath}
\usepackage{amssymb}
\usepackage{amsthm}
\usepackage{graphicx}
\usepackage{xcolor}
\usepackage{url}
\usepackage{cite}
\usepackage[hidelinks]{hyperref}

\usepackage{enumitem, amsmath, amssymb}
\usepackage{tikz}
\usepackage{varwidth}
\usepackage[operators,sets]{cryptocode}
\usepackage[breakable]{tcolorbox}
\usepackage{cuted}
\usepackage{subcaption}

\renewcommand{\textcolor}[2]{\bgroup\color{#1}#2\egroup}

\newcommand{\update}[1]{\textcolor{black}{#1}}

\newcounter{algo}
\renewcommand{\thealgo}{\arabic{algo}}

\newcommand{\algotitle}[2]{\refstepcounter{algo}\label{#1}Alg.~\thealgo: #2}
\newcommand{\Note}[1]{{\footnotesize\textcolor{blue}{$\rhd$~#1}}}
\newcommand{\NoteNL}[1]{\\{\footnotesize\textcolor{blue}{$\rhd$~#1}}}
\newcommand{\Op}[1]{\ensuremath{\mathrm{#1}}}
\newcommand{\Proc}[1]{\textsc{#1}}

\newcommand{\sysname}{\text{Trinq\ensuremath{\epsilon}t}}

\newcommand{\Pa}{P}
\newcommand{\Ser}{S}
\newcommand{\Ts}{T_s}
\newcommand{\T}{\mathrm{T}}
\newcommand{\Tu}{T_u}
\newcommand{\To}{T_{\text{Outer}}}

\newcommand{\Seen}{L_s}

\newcommand{\Q}{Q}
\newcommand{\W}{W}
\newcommand{\US}{L_u}
\newcommand{\Rs}{R_{s}}
\newcommand{\Ru}{R_{u}}
\newcommand{\Li}{\mathcal{L}}
\newcommand{\C}{C}
\newcommand{\F}{F}

\newcommand{\B}{\mathcal{B}}
\newcommand{\cnt}{\Op{cnt}}

\newcommand{\rec}{\Op{Rec}}

\newcommand{\Cmp}{\Op{Cmp}}
\newcommand{\Mult}{\Op{Mult}}
\newcommand{\Trunc}{\Op{Trunc}}
\newcommand{\Equal}{\Op{Equal}}
\newcommand{\la}{\langle}
\newcommand{\N}{l}
\newcommand{\calA}{\mathcal{A}}
\newcommand{\ra}{\rangle}

\newcommand{\Br}{\ensuremath{\Op{Br}}}
\newcommand{\Ex}{\ensuremath{\Op{Ex}}}
\newcommand{\K}{\ensuremath{\Op{K}}}
\newcommand{\randarrow}{\xleftarrow{\$}}
\newcommand{\Ds}{\Op{D}}

\newcommand{\G}{\mathcal{G}}

\newcommand{\Fshuffle}{\mathcal{F}_{\Op{Shuffle}}}

\algnewcommand{\Input}{\State \textbf{Input:} }
\algnewcommand{\Output}{\State \textbf{Output:} }

\newenvironment{packeditemize}{
\begin{itemize}[leftmargin=*]
  \setlength{\itemsep}{0pt}
  \setlength{\parskip}{0pt}
  \setlength{\parsep}{0pt}
}{\end{itemize}}

\newenvironment{mybox2}[1]{%
    \begin{tcolorbox}[title={#1}, 
    colback=white, breakable]%
    }{
    \end{tcolorbox}
}

\newtheorem{definition}{Definition}

\newtheorem{theorem}{Theorem}

\begin{document}

\title{\sysname: Private Triangle and Quadrangle Counting over Distributed Graphs}

\author{
\IEEEauthorblockN{
Mushtari Sadia$^{*}$ \qquad Amrita Roy Chowdhury$^{*}$
}
\IEEEauthorblockA{
$^{*}$University of Michigan, Ann Arbor
}
}
\IEEEoverridecommandlockouts
\makeatletter\def\@IEEEpubidpullup{6.5\baselineskip}\makeatother

\maketitle

\begin{abstract}
Triangle and quadrangle counts are core statistics in graph analysis. Yet many real-world graphs are distributed across multiple parties and encode highly sensitive relationships, precluding direct data sharing. Secure multi-party computation (MPC) provides a principled alternative, but introduces a fundamental tension: the protocol must hide the graph's topology, forcing fully data-oblivious operations over \emph{all} potential edges--while real-world graphs are extremely sparse, making most of this work wasted. We propose \sysname, a system that resolves this tension through new MPC-friendly algorithms for private triangle and quadrangle detection. \sysname{} safely exploits sparsity using a suite of novel techniques that eliminate vast numbers of unnecessary operations while preserving strong security in the malicious threat model. \sysname~supports both counting and enumeration and, in our evaluation, outperforms all five baselines by up to $10^{5}\times$ in runtime.
\end{abstract}

\IEEEpeerreviewmaketitle

\section{Introduction}
\label{sec:intro}


Graphs are a fundamental tool for modeling real world relationships, but in many settings no single entity holds the entire graph. Instead, the graph is \textit{distributed} across mutually distrustful parties, each observing only its local subgraph. This setting arises in deployments ranging from decentralized social platforms such as Mastodon, Diaspora, or PeerTube, where each node is a party and edges encode `follower' relationships, to settings where a small set of organizations hold disjoint slices of a joint network, such as two internet service providers holding different portions of a shared topology.

Even for distributed graphs, global statistics such as triangle and quadrangle detection remain fundamental to understanding graph structure~\cite{triangle1,triangle2,triangle3,triangle4,triangle5,triangle6,triangle7,triangle9,cycle1,cycle2,cycle3,cycle4,cycle5,cycle6,chiba1985arboricity}. Triangles (3-cycles) underlie the clustering coefficient, a core measure of social cohesion used in community detection, link prediction, and friend recommendation~\cite{community-detection,friend-recommendation,traffic-flow}. Quadrangles (4-cycles) are similarly important in bipartite settings, including dating platforms, mentor--student networks, and user--item recommenders~\cite{lind2005cycles,talaga2022structural,cycle7,cycle8}. More broadly, both cycles support fraud and collusion detection and the discovery of shared-interest communities~\cite{lu2023graph,becchetti2008efficient,klymko2014using}, making them key building blocks of graph analytics. 

However, directly pooling the parties' graph data is infeasible because edges encode sensitive relationships. In social networks, edges reveal private connections; in ISP networks, topology is proprietary, and its exposure could enable attacks on critical infrastructure. Thus, no party is willing (or legally able) to reveal its raw graph. Secure multi-party computation (MPC) offers a principled alternative, enabling computation over the union of the graphs without exposing any party's inputs. Yet applying generic MPC to graph analytics is notoriously expensive. Naive approaches based on an adjacency-matrix ($A$) representation quickly become prohibitive at scale. For triangle counting, the classic baseline computes $A^3$ (equivalently, $\mathrm{tr}(A^3)/6$), implicitly considering all $\binom{n}{3}$ vertex triples and incurring cubic work and communication. For quadrangles, an analogous baseline forms $B=A^2$ and computes $\frac{1}{2}\sum_{i<j}\binom{B_{ij}}{2}$, again requiring matrix multiplication and processing $\Theta(n^2)$ entries in MPC. Quadrangle enumeration is even costlier when materialized as pairs of common neighbors for every vertex pair, yielding up to $\Theta(n^4)$ candidates.


A line of prior work~\cite{araki2021secure,mazloom2018secure,nayak2015graphsc,graphiti,peng2024mapcomp,zou2025ringsg,zou2024cognn} explores secure graph computation through message passing, which is well suited to iterative aggregation tasks such as PageRank, where vertices exchange constant-sized updates. Triangle and quadrangle detection are \textit{fundamentally different}: they require identifying topological patterns, not simply aggregating per-node values. Even in the cleartext setting, the number of subgraphs that must be inspected grows rapidly with vertex degree, making these computations substantially more complex than message aggregation.
Consequently, these works are intrinsically ill-equipped for our task. The closest work, Oryx~\cite{zhong2024oryx}, supports private cycle detection but suffers from several drawbacks: (1) leaks significantly more information (all $k$-hop paths), (2) assumes a semi-honest threat model, and (3) incurs high memory overhead that limits practicality to very low-degree graphs. These limitations motivate a specialized approach that scales to realistic graphs under a stronger threat model.
\\\noindent\textbf{Our Contributions.} To address this gap, we introduce \sysname\footnote{\textbf{Tri}angle and \textbf{Q}uadrangle \textbf{E}numeration \textbf{T}ool}, a system for privacy-preserving triangle and quadrangle detection over distributed graphs. \sysname~operates in the 2+1 server model, a widely adopted MPC architecture due to its practical efficiency~\cite{wagh2022pika,boyle2025preprocessing,agarwal2024secure,cryptoeprint:2020/1392,cryptoeprint:2019/1095}. Data owners secret-share their local graphs between two non-colluding servers, which jointly execute the MPC protocol. A third auxiliary server, the dealer, participates \textit{only} in the data-independent offline phase to generate correlated randomness and plays no role online. \sysname~operates under the malicious threat model and supports both counting and enumeration of triangles and quadrangles.

The core challenge is that graph structure itself is sensitive and must remain hidden. An MPC protocol must therefore execute data-obliviously, performing the \textit{same} work regardless of the true topology. Yet real-world graphs are sparse, making such uniformity enormously wasteful. This creates a fundamental tension: hiding structure requires uniform work, while scalability requires exploiting sparsity. \sysname~addresses this through a suite of novel techniques:
\\\noindent \textbf{1. Leveraging Sparsity via Differentially Private Leakage.}
Revealing differentially private noisy degrees unlocks a major efficiency gain: \sysname~bounds per-vertex processing by $n\hat{d}_{max}$ edges (where $\hat{d}_{max}$ is the noisy maximum degree), rather than the $n^2$ worst case. Since noisy degrees are public, we can organize computation around the actual graph density without hiding this information inside MPC.\footnote{While \cite{mazloom2018secure,mazloom2020secure} also use DP leakage, their setting and goals differ substantially from ours; see Sec.~\ref{sec:relatedwork} for details.}
\\\noindent \textbf{2. Reduced No-Ops via Noisy-Degree Ordering.}
The above already saves substantial work, but real-world graphs have long-tailed degree distributions, where most vertices have degrees far below $\hat{d}_{max}$. To exploit this, we impose a noisy-degree-based vertex ordering. By processing vertices in this order and retiring edges as they are consumed, \sysname{} reduces the number of processed neighbors to $\sum_{i=1}^n \hat{d}_i$, proportional to the number of edges in the graph.
\\\noindent \textbf{3. Hiding Access Patterns with Constant Fake Access.}
Graphs can leak information through access patterns even when values are secret shared. A naive defense replaces each lookup with a full linear scan over all $n$ vertices, which is prohibitively slow; ORAM avoids full scans but incurs $O(\log n)$ overhead per access with large constants. We observe that our access pattern is highly structured, making full ORAM machinery unnecessary. Exploiting this structure, we design a technique that pairs each real access with exactly \textit{one} fake access, efficiently hiding access patterns.
\\\noindent\textbf{4. Efficient Malicious Security.} Beyond standard malicious behavior in MPC, graphs introduce a unique vulnerability: each undirected edge is reported \emph{twice}, once by each endpoint. When the endpoints belong to different data owners, a malicious party can falsify an edge, creating asymmetric neighbor lists and corrupting the computation. Naively checking all $n^2$ adjacency entries for symmetry is prohibitively expensive. Instead, we introduce a consistency check with only $O(|E|\log|E|)$ cost.
\\\noindent\textbf{5. End-to-end Optimized MPC.} We incorporate several implementation level optimizations to make \sysname~practical at scale. A major challenge is the large number of non-linear operations (equality tests, comparisons) required for cycle detection, which are notoriously expensive in secret-sharing–based MPC. We address this by using function secret sharing (FSS) to evaluate these operations, pushing nearly all heavy computation into the offline phase and leaving an extremely lightweight, one-round online execution--yielding substantial performance improvements~\cite{boyle2016function,cryptoeprint:2020/1392,cryptoeprint:2019/1095,wagh2022pika}. Finally, we exploit parallelism in our protocols: we show how to parallelize the most computation-intensive components, enabling highly scalable and efficient performance.
\\\textbf{Experimental Results.}
We compare \sysname{} against Oryx~\cite{zhong2024oryx}, an adjacency-matrix baseline, and three memory-access baselines (\textit{ORAM}, \textit{OSAM}, and \textit{Shuffle}). \sysname{} consistently achieves the best performance despite operating under a stronger malicious threat model than Oryx, \textit{ORAM}, and \textit{OSAM}. End-to-end execution takes $<$2.3 minutes for triangle enumeration and $<$2.1 minutes for quadrangle detection. Compared with graph-based baselines, \sysname{} improves runtime by up to $9.2\times$ (Oryx) and $4\times10^5\times$ (adjacency matrix). Compared with generic memory-access baselines, it achieves up to $4.7\times10^5\times$ lower runtime while reducing communication by up to $6.9\times10^3\times$.

\section{Background}\label{sec:background}

\noindent \textbf{Linear Secret Shares.} 
Linear secret shares (LSS) is an MPC technique that allows mutually-distrusting parties to securely compute over secret inputs. We use $[x]$ to denote a linear secret sharing of an input $x \in \mathbb{F}$. Each party $P_i$ holds a share $[x]_i$ such that $\sum_{i=1}^k [x]_i=x$, where $k$ denotes the total number of participating parties.
LSS supports local linear operations. 
\\\noindent \textbf{Authenticated Secret Shares.} 
Here, we adopt the SPDZ-style~\cite{damgaard2010multiparty,keller2016mascot} authenticated secret shares (ASS), which ensures the integrity of the shared secrets. The approach involves augmenting the secret shares with an additional sharing of an information-theoretic message authentication code (IT-MAC). Specifically, each party shares $[\gamma]$ for a secret MAC key $\gamma \in \mathbb{F}$, and for a shared value $[x]$, they also share the corresponding MAC $[\gamma \cdot x]$. We denote $\langle x \rangle = ([x], [\gamma \cdot x])$
to be the authenticated secret sharing for a secret $x$ and $\langle x \rangle_i = ([x], [\gamma \cdot x]) \in \mathbb{F}^2$ as the authenticated secret share held by party $P_i$. ASS also supports local linear operations.
\\\noindent\textbf{Function Secret Share (FSS).}
A two-party FSS~\cite{boyle2015function,boyle2016function} scheme splits a function $f$ into succinct function shares (keys) such that any single share reveals no information about $f$, yet combining the parties’ evaluations at an input $x$ reconstructs the correct output $f(x)$. We focus on distributed point functions (DPFs), which implement FSS for point functions $f_{\alpha,\beta}$ defined by $f_{\alpha,\beta}(\alpha)=\beta$ and $f_{\alpha,\beta}(\alpha')=0$ for all $\alpha' \neq \alpha$. Conceptually, such a function corresponds to a one hot vector $\mathbf{e}=(0,\ldots,0,\beta,0,\ldots,0)$ with $\beta$ at position $\alpha$. Given their keys, party $b$ locally evaluates to obtain $y_b(x)$ such that, for all $x$, $y_0(x)+y_1(x)=f_{\alpha,\beta}(x)=\mathbf{e}[x]$. (See App.~\ref{app:FSS}.)

\noindent\textbf{Edge Differential Privacy.}
We focus on the \textit{local} model of differential privacy (DP), where each data-owning party perturbs its own data before sending it to an untrusted aggregator. For graphs, the standard local guarantee is edge-local DP~\cite{Graph7,Graph11,qin2017generating,imola2022robustness,kolluri2022lpgnet,eden2025triangle,279970,hillebrand2024cycle}, which protects the existence of any individual edge. Thus, an adversary observing the output cannot distinguish between graphs differing in a single edge.
\begin{definition}\label{def:eldp}
A randomized mechanism $\mathcal{M}$ is $(\varepsilon,\delta)$-\textit{edge differentially private} if, for any two graphs $G$ and $G'$ differing in exactly one edge and any measurable output set $S$,
\begin{gather}\label{eq:ledp}
\Pr[\mathcal{M}(G)\in S]
\leq e^{\varepsilon}\Pr[\mathcal{M}(G')\in S]+\delta.
\end{gather}
\end{definition}
In the local model, $G$ is the subgraph held by one party, which may be as small as a single adjacency list. While the classical Laplace mechanism is the canonical tool for achieving DP, we require \textit{non-negative}, \textit{bounded} noise. We therefore use the \emph{truncated shifted discrete Laplace} distribution, $\text{TSDLap}(\lambda,t)$, supported on ${0,\ldots,2t}$ with probability proportional to $\exp(-|x-t|/\lambda)$. Specifically, if each party reports
$\tilde d_v=d_v+\eta$ for vertex $v$ in its subgraph, where
$
\eta\sim\text{TSDLap}\!\left(\frac{2}{\varepsilon},
t=\Big\lceil2+\frac{2}{\varepsilon}\log\!\left(\frac{2}{\delta}\right)\Big\rceil\right),
$
the mechanism satisfies $(\varepsilon,\delta)$-edge local DP~\cite{Bell22,cryptoeprint:2024/469}.


\section{Problem Statement}\label{sec:setting}
\noindent \textbf{Notations.} Let $\G = (V, E)$ be an undirected, unweighted graph, where $V$ and $E$ denote the sets of vertices and edges, respectively. We assume $|V| = n \in \mathbb{N}$, so $V = \{v_1, \ldots, v_n\}$. For each vertex $v \in V$, let $\N_v$ denote its list of neighbors. Additionally, let $d_v$ and $\hat{d}_v$ be its true and  noisy degree, respectively. Both $n$ and the noisy degrees $\hat d_v$ (once published, Sec.~\ref{sec:datacollection}) are \emph{public}, known to every data-owning party and both servers; only the true edges are kept secret.
\\\textbf{Setting.} We consider a distributed setting where $\G$ is \textit{partitioned} across $m$ parties ($2 \le m \le n$). Each party $\Pa_i$ holds a disjoint subset of vertices $V_i \subseteq V$ and their incident edges, with
$
V = V_1 \cup \cdots \cup V_m,
\quad V_i \cap V_j = \emptyset \text{ for } i \neq j.
$
Thus, each party has only a \textit{local} view of the graph, given by
$
G_i=(V_i,E_i), \quad
E_i={(u,v)\in E\mid u\in V_i},
$
equivalently, all edges appearing in neighbor lists $\ell_v$ for $v\in V_i$. No party knows the complete graph. Let $A_G$ denote the adjacency matrix of $\G$, where $A_{ij}=1$ iff an edge exists between $v_i$ and $v_j$, and $0$ otherwise.

As discussed in Sec.~\ref{sec:intro}, this model captures diverse deployments. At one extreme, $m=n$ captures distributed social networks such as Mastodon, Diaspora, and PeerTube, where each party represents a node and edges encode ``follower'' relationships. At the other, small $m$ captures organizations collaborating over partitioned networks, such as two Internet service providers holding different portions of a network topology, with nodes representing routers or network endpoints.
\\Our goal is to compute a statistic over the global graph $\G$. Specifically, we aim to: (1) \textit{count} the total number of triangles (3-cycle) and quadrangles (4-cycle) in $\G$; and (2) \textit{enumerate} i.e., list the vertices in each triangle and quadrangle. We discuss how to extend \sysname{} to additional tasks in App.~\ref{app:extension}.

\sysname~operates in the (2+1)-party setting, a well-established MPC model known for its practical efficiency~\cite{wagh2022pika,boyle2025preprocessing,agarwal2024secure,cryptoeprint:2019/1095,cryptoeprint:2020/1392}. The protocol has two phases: an input-independent offline phase that can be executed in advance, and an online phase that performs cycle detection once the graph data is available. During the online phase, the $m$ data-owning parties secret-share their local graph data between two \emph{non-colluding, untrusted} servers, $\Ser_0$ and $\Ser_1$, which jointly execute the MPC protocol for cycle detection. An auxiliary server, or ``dealer,'' participates \textit{only} offline to generate correlated randomness and is not involved in online execution.

The dealer generates two types of correlated randomness. First, it enables authenticated secret sharing by sampling a global MAC key $\alpha$ and a random value $r$, then distributing additive shares $\alpha_0+\alpha_1=\alpha$ and $r_0+r_1=r$ to the servers. It additionally provides $z_b=\alpha_b r+\delta_b$, where $\delta_0+\delta_1=0$. To secret-share a data-owning party's input $x$, server $\Ser_b$ sends $r_b$ to the party, which returns $y=x-(r_0+r_1)$ (optionally encrypted to hide it from the dealer). Server $\Ser_0$ sets $[x]_0=r_0$ and $[\alpha x]_0=\alpha_0y+z_0$, while $\Ser_1$ sets $[x]_1=y+r_1$ and $[\alpha x]_1=\alpha_1y+z_1$. Second, the dealer provides correlated randomness for our optimized protocol implementation (Sec.~\ref{sec:optimizations}).
\\\noindent\textbf{Threat Model.} \sysname~operates in the standard malicious security model with abort~\cite{10.1145/28395.28420,cryptoeprint:2000/067}.
In the \textit{offline} phase, either (1) the dealer may be malicious and provide ill-formed correlated randomness that must be verified, or (2) one of the two servers may be malicious. In the \textit{online} phase, at most one server may be corrupted. Any number of data-owning parties may also be malicious and collude with the corrupt server.
Unlike tabular data, each graph edge is reported by both endpoints. A malicious data owner may therefore create inconsistent neighbor lists by misreporting an edge, breaking correctness. Our protocols detect such inconsistencies and operate only on \emph{well-formed} graphs. If both endpoints are malicious and colluding, however, they may consistently misreport an edge. The resulting graph remains well-formed, and \sysname~computes correctly on this reported graph. Without a party holding the ground-truth global graph, such coordinated falsification is inherently undetectable and therefore outside our threat model. Without a third party holding the ground-truth global graph (precisely the distributed setting we consider), detecting such coordinated falsification is beyond the purview of cryptography. This limitation is inherent and therefore outside our threat model.

\section{Can we use Generic MPC?} 
\label{sec:MPC}
Before presenting our protocol, we highlight why generic MPC is ill-suited for our task. MPC reveals nothing beyond the output (here, only the triangles or quadrangles). This conflicts with standard graph algorithms: because \textit{structural information} must remain hidden, the protocol cannot follow the actual topology. Instead, it must perform \textit{data-oblivious} operations, including ``no-ops,'' on \textit{every} node and edge. For example, finding a vertex's neighbors requires testing against all vertices to avoid revealing its true degree.

Now consider triangle detection. A direct approach enumerates all triples $(i,j,k)\in[n]^3$, $i\neq j\neq k$, and checks whether $(v_i,v_j),(v_i,v_k),(v_j,v_k)$ exist via $A_{ij}A_{ik}A_{jk}$ (equivalently, $|C_3|=\frac{1}{6}\mathrm{Tr}(A^3)$). This is straightforward under MPC since matrix multiplication is data-oblivious, but requires $\Theta(n^3)$ secure multiplications. Although sub-cubic algorithms exist~\cite{alman2024asymmetryyieldsfastermatrix}, they are ``galactic algorithms"~\cite{galactic}, whose large constants and memory footprint make them impractical~\cite{dumas2019secure}.

To tackle secure graph computation, prior work~\cite{araki2021secure, mazloom2018secure, nayak2015graphsc, graphiti} has explored topology-hiding message passing, where nodes exchange messages along edges and update local state. This works well for iterative aggregation tasks such as PageRank, where messages are constant-size summaries. Cycle detection, however, is a \emph{pattern-matching} task that requires reasoning about graph structure. For triangle counting, nodes must effectively exchange entire neighbor lists and compute their intersections, requiring $O(n^2)$ operations per node and $O(n^3)$ overall\footnote{Oryx~\cite{zhong2024oryx} modifies this slightly but leaks all $k$-hop paths and assumes a bounded maximum degree, yet still falls short of \sysname{} (see Sec.~\ref{sec:evaluation}).}. The same limitation extends to quadrangle detection. Alternatively, triangle counting can be expressed as a SQL join-counting query, but even in the best case incurs $O(n^2d^3\log(n^2d^3))$ time on a $d$-regular graph due to exhaustive join padding and oblivious resizing. As shown in App.~\ref{app:sql}, this is asymptotically worse than \sysname{}, motivating our specialized protocols.



\section{\sysname: Triangle Detection}\label{sec:name}
As noted earlier, naive approaches scan $\Theta(n^3)$ triples for triangles and $\Theta(n^4)$ quadruples for quadrangles. Yet real graphs are often \emph{sparse}, making much of this work unnecessary. \sysname{} exploits this sparsity while preserving strong security guarantees. We first present triangle detection; quadrangle detection follows the same principles.
\begin{mybox2}{\algotitle{alg:tri_plain}{\textsc{TriangleDetect}}}
\footnotesize
\textbf{Input:}  Graph $\G=(V,E)$ with neighbor lists $\{\N_v\}_{v\in V}$ \\
\textbf{Output:} Triangle count $\mathit{cnt}$ and triangle list $\mathcal{L}$.
\begin{enumerate}[nosep, label=\arabic*.]
\item $\mathit{cnt} \gets 0$, \;\; $\mathcal{L} \gets [\,]$
\item Sort vertices as $v_1,\ldots,v_n$ such that $d(v_1)\ge \cdots \ge d(v_n)$ (break ties by id), and relabel them so that $v_i$ has id $i$.
\item \textbf{For} $i\in[n]$:
\item \hspace*{1.2em} \textbf{For each} $j \in \N_i$ with $j > i$:
 \item \hspace*{2.4em} $\mathit{cnt} \gets \mathit{cnt} + |l_{j}\cap l_i|$
\item \hspace*{2.4em} $\mathcal{L}.\text{Append}(i,j,k)$ where $k \in \{l_{j}\cap l_i\}$
\item $T \gets \mathit{cnt}/3$
\item \Return $(T,\mathcal{L})$
\end{enumerate}
\end{mybox2}
\subsection{Technical Overview}\label{sec:overview:naive}
We begin with a simplified algorithm (\autoref{alg:tri_plain}) that forms the backbone of our secure protocol. This is unsuitable for MPC as written because it leaks graph topology, but it provides the intuition for our data-oblivious protocol.
Rather than examining every vertex triple, we exploit sparsity by operating only on existing edges. A triangle $(v,u,w)$ exists exactly when $w\in\N_v\cap\N_u$. Thus, for each edge $(v,u)\in E$, we count (or enumerate) triangles by computing $\N_v\cap\N_u$.
\begin{figure*}[]
\centering
\includegraphics[width=\textwidth]{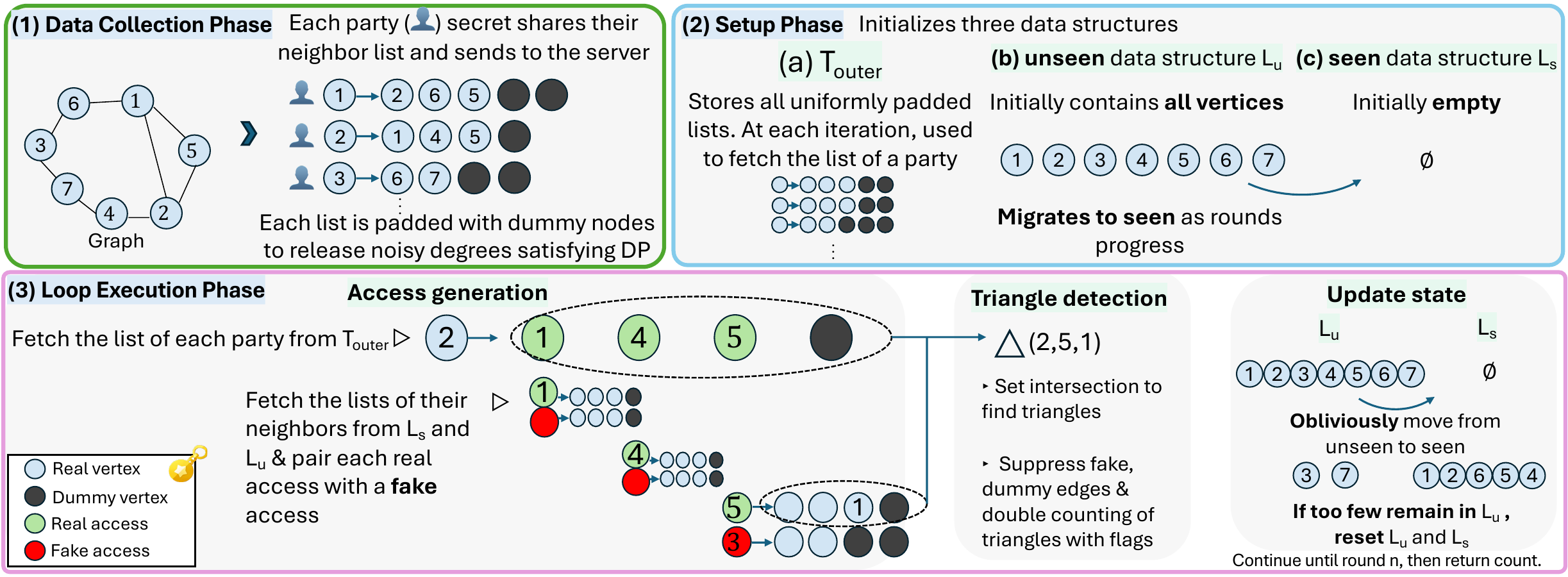}
\caption{\textbf{High-level overview of \sysname.} \sysname{} runs in three phases. In (1) \textit{data collection}, each data owner locally perturbs vertex degrees to obtain DP noisy degrees ($\hat d_v$) and submit secret-shared neighbor lists padded to the noisy degrees (\autoref{fig:degree}). In the one-time (2) \textit{setup phase}, the servers first check the well-formedness of the submitted neighbor lists to detect malicious behavior (omitted for simplicity; see \autoref{alg:wfcheck}). At a high level, \sysname{} follows the two-loop structure of \autoref{alg:tri_plain}. To enable this, the servers initialize three data structures: (i) $\To$, used to fetch neighbor lists in the outer loop; and (ii) $\Seen$ and $\US$, used in the inner loop to track vertices already accessed and not yet accessed, respectively. As the outer loop progresses, vertices move from $\US$ to $\Seen$; thus, $\Seen$ grows while $\US$ shrinks. In (3) \textit{loop execution}, the outer loop fetches the neighbor list of vertex $i$ from $\To$, while the inner loop retrieves the corresponding neighbors. The key insight for hiding the resulting access pattern is to \textit{equalize} the servers' views of accesses to $\Seen$ and $\US$ vertices. To this end, \sysname{} generates $\hat d_i$ fake accesses through a custom MPC protocol such that they are indistinguishable from the $\hat d_i$ real accesses (Fig.~\ref{fig:seen-access}), where $\hat d_i$ is the noisy degree of vertex $i$. The protocol then computes the triangle count by securely intersecting the retrieved neighbor lists in the two loops; appropriate secret-shared flags suppress contributions from dummy entries and fake accesses, ensuring correctness. Finally, \sysname{} updates the data structures for the next iteration, obliviously moving newly accessed vertices from $\US$ to $\Seen$ (Fig.~\ref{fig:update-state}). When the $\US$ pool becomes too small to generate the required fake accesses, the data structures are securely reset and reinitialized. After $n=|V|$ rounds, \sysname{} returns the triangle count.}
\label{fig:protocol-trace}
\end{figure*}

An important ingredient in our approach is degree-based ordering under MPC. We sort vertices by decreasing degree and re-index them so that $v_i$ has id $i$ (\autoref{fig:degree}); we use these re-indexed IDs throughout the rest of the paper. Under this ordering, \autoref{alg:tri_plain} considers only neighbors $j\in\N_i$ with $j>i$ and, for each such edge, computes $\N_j\cap\N_i$, ensuring every edge is processed once. As discussed in Sec.~\ref{sec:MPC}, data-oblivious MPC incurs many no-ops to hide graph topology. Processing each edge in one direction substantially reduces this overhead, as detailed in Sec.~\ref{sec:protocol}. Each triangle is encountered three times, once per edge, so we divide the count by three. Duplicates in triangle list $\mathcal{L}$ are removed after publication.
\noindent\textbf{Key Ideas.} Despite its efficiency, the algorithm above leaks graph topology and cannot be used directly under MPC. \sysname~introduces four ideas that preserve its structure while making the computation data-oblivious.
\\\textbf{1. Leaking Differentially Private Degrees.}
The usage of exact degrees in \autoref{alg:tri_plain} exposes sensitive information. Instead, we relax the security guarantee by leaking a \textit{differentially private estimate} of degrees. This leakage is formally bounded: DP ensures the adversary learns nothing about the presence of any individual edge (Def.~\ref{def:eldp}) in a neighbor list, yet it enables large performance gains. Prior MPC work has explored related leakage tradeoffs~\cite{10.1145/3133956.3134030,10115007,Groce2019CheaperPS,mazloom2018secure},\footnote{While \cite{mazloom2018secure,mazloom2020secure} use DP leakage in secure graph message passing, their use of DP differs fundamentally from ours. CARGO~\cite{liu2024cargo} reveals noisy degrees only to bound sensitivity for releasing DP triangle counts and still incurs $O(n^3)$ computation (see Sec.~\ref{sec:relatedwork} and App.~\ref{app:related-work}).} but the appropriate leakage is application-dependent. Here, noisy degrees provide three benefits: (1) \emph{sparsity}, by scaling work with the sum of noisy degrees rather than all $n^2$ potential edges; (2) \emph{communication}, by padding neighbor lists only to their noisy degrees rather than $n$; and (3) \emph{accuracy}, by tightening global sensitivity for cycle counting (Sec.~\ref{sec:DP}), reducing output noise. For enumeration, noisy degrees do not affect correctness: only correct triangles are identified. Notably, for large $m$, \sysname{} can provide \emph{stronger} privacy than secure message-passing frameworks~\cite{nayak2015graphsc,graphiti,kapoor2025mathsf}, which reveal $|V_i|+|E_i|$ for each data owner through input-list sizes. When $m=n$ and each party contributes one vertex, this reveals every vertex's exact degree; \sysname{} instead reveals only DP degree estimates.
\\\textbf{2. Hiding Access Patterns with Constant Fake Accesses.}
In \autoref{alg:tri_plain}, each inner-loop iteration retrieves the neighbor lists of the current vertex's neighbors. Secret sharing hides the accessed values but not the memory-access pattern, which can reveal graph structure. For example, repeated accesses across iterations can directly reveal common neighbors, with leakage accumulating over time. A naive defense is to scan all $n$ vertices for every lookup, hiding the access pattern but at prohibitive cost. The standard alternative is oblivious RAM (ORAM), specifically distributed ORAM in our setting,\footnote{Since both the access index and memory array are secret-shared.} but ORAM incurs $\Omega(\log n)$ overhead per access~\cite{goldreich1996software}. Despite theoretical advances, practical ORAM constructions remain costly due to large constants and can even underperform linear-scan approaches~\cite{cryptoeprint:2017/827}, as we also observe empirically (Sec.~\ref{sec:evaluation}).

Our key insight is that we do not need the full generality of ORAM. \autoref{alg:tri_plain} has a structured access pattern: in outer-loop iteration $i$, it performs exactly $\hat{d}_i$ memory accesses. The main leakage is how many target previously accessed (`seen') versus new (`unseen') vertices. We hide this by pairing each real access with a \textit{single} fake access. Thus, in iteration $i$, the servers observe exactly $\hat{d}_i$ accesses drawn uniformly from `seen' positions and $\hat{d}_i$ from `unseen' positions, independent of the true graph. This hides the access pattern with only constant-factor ($2\times$) overhead, rather than linear scans or ORAM. We operationalize this idea with MPC-friendly data structures that make fake and real accesses indistinguishable.
\\\textbf{3. Reduced No-ops via Degree Ordering.}
When retrieving neighbor lists, we must hide \emph{which} neighbors are accessed, since list lengths can reveal vertex identities. A straightforward defense: padding every list to the global maximum noisy degree $\hat{d}_1$ is secure but wasteful, especially since most real graphs are skewed, with many low-degree vertices and only a few near the maximum. Our key insight is that each undirected edge need only be processed once, when it first appears, and can be ignored thereafter. Since the outer loop visits vertices in decreasing noisy-degree order, neighbor lists fetched in later iterations only need padding up to the maximum \emph{remaining} degree among unprocessed vertices, not $\hat{d}_1$. The challenge is retiring edges in a data-oblivious way. We address this by using the noisy degree ordering to process each edge only in a single direction determined by its endpoints' degrees.
\\\textbf{4. Efficient Malicious Security.}
Graphs introduce a unique challenge beyond standard malicious behavior in MPC: each undirected edge is reported by \textit{both} endpoints. A malicious data owner can therefore create inconsistencies across neighbor lists by misreporting an endpoint. Hence, we require the reported graph to be \emph{well-formed}. If both endpoints are malicious and consistently falsify an edge, the graph still remains well-formed, and our protocol computes correctly on that reported graph. Detecting such collusion is inherently outside cryptographic guarantees (Sec.~\ref{sec:setting}). Naively checking well-formedness requires verifying $A_{jk}=A_{kj}$ across all $n^2$ possible entries. Instead, we exploit a key invariant: every true edge should appear exactly twice across all neighbor lists, and we show how to enforce this in only $O(|E|\log|E|)$ work.
\vspace{-1mm}

\subsection{Protocols}
\label{sec:protocol}
\noindent\textbf{Workflow.} The full workflow is illustrated in \autoref{fig:protocol-trace}. \sysname{} runs in three phases. In (1) \textit{data collection}, where data owners locally perturb vertex degrees and publish noisy estimates. Each party then submits secret-shared neighbor lists for its vertices, padded to the noisy degrees. After this, data owners can go offline, and the remainder runs solely between the two servers. Next, in the one-time (2) \textit{setup phase}, the servers first check the well-formedness of the neighbor lists to detect malicious behavior. Then, they store the padded lists in $\To$ and initialize the global state: a collection of MPC-friendly data structures used throughout the computation.

Next, \sysname~enters the (3) \textit{loop-execution} phase, which follows the two-loop structure of \autoref{alg:tri_plain}: the \emph{outer loop} iterates over vertices in the noisy degree order ($i$ denotes the $i$-th vertex). The \emph{inner loop} retrieves the neighbors for each vertex $i$, where $\hat{d}_i$ is its noisy degree. To hide the true access patterns, every real neighbor access is paired with a \textit{fake} access (\autoref{fig:seen-access}). Consequently, the servers observe exactly $\hat{d}_i$ random accesses drawn from `seen' and $\hat{d}_i$ random accesses from `unseen,' ensuring that the observed access pattern is oblivious of the neighborhood structure. For each accessed vertex, \sysname{} computes common neighbors with vertex $i$, and incorporates flags that suppress contributions from fake accesses. Next, \sysname~\textit{updates} the global state. 

\begin{mybox2}{\algotitle{alg:protocol}{\sysname: Protocol for Triangle Enumeration}}
\footnotesize
\setlength{\parindent}{0pt}

\textbf{Parameters:} $\epsilon, \delta$; Privacy parameters

\vspace{1mm}
\textit{Data Collection Phase} 

\textbf{Each} data owning party $\Pa_j$, $j \in [m]$

\begin{enumerate}[nosep, label=\arabic*.]
\setcounter{enumi}{0}

\item \textbf{For} each vertex $v$ in $V_j$
\item \hspace{0.25cm}
      Sample $\eta_v \sim
      \Op{TSDLap}
      \left(
      2/\epsilon,\,
      t=\left\lceil
      2+\frac{2}{\epsilon}\log\frac{2}{\delta}
      \right\rceil
      \right)$
\item \hspace{0.25cm} Publish $\hat d_v = d_v + \eta_v$
\item \textbf{EndFor}

\item[] \textcolor{blue}{$\rhd$ Re-index vertices according to noisy degrees}

\item Sort all $n$ vertices as $v_1,\ldots,v_n$ such that
      $\hat d_{v_1} \geq \hat d_{v_2} \geq \cdots \geq \hat d_{v_n}$
      (ties broken by ID)
\item Re-index $v_i$ as $i$

\item \textbf{For} each vertex $i$ in $V_j$ (after re-indexing)
\item \hspace{0.25cm}
      Sample $\eta_i$ dummies
      $D_i \subset_R [n+1,n+2t]$
\item \hspace{0.25cm} $l_i.\Op{Append}(D_i)$
\item \hspace{0.25cm} Sort all vertices in $l_i$ according to $\prec$
\item \hspace{0.25cm} $\Op{SecretShare}(l_i)$
\item \textbf{EndFor}

\end{enumerate}

\vspace{1mm}
\textbf{Each} server $\Ser_b$

\vspace{1mm}
\textit{Setup Phase}

\begin{enumerate}[nosep, label=\arabic*.]
\setcounter{enumi}{12}

\item Set $\langle \To[i] \rangle = \langle l_i \rangle$

\item \textbf{If}
      $\Proc{WFCheck}(\langle \To \rangle, \{\hat d_i\}_{i\in[n]})=0$,
      then \textbf{Abort}

\item
$\begin{aligned}[t]
(\langle \Seen \rangle,\langle \US \rangle)
\leftarrow{}&
\Proc{InitializeState}(
    \langle \To \rangle,
    \hat d_1, n, \Op{isReset}=0)
\end{aligned}$

\end{enumerate}

\vspace{1mm}
\textit{Loop Execution Phase}

\begin{enumerate}[nosep, label=\arabic*.]
\setcounter{enumi}{17}

\item $\Op{Flag} \gets 1$

\item \textbf{For} $i \in [n]$
      \hfill
      \textcolor{blue}{$\rhd$ Outer loop in order $\prec$}

\item \hspace{0.25cm}
      $\langle l_i \rangle \gets \langle \To[i] \rangle$

\item \hspace{0.25cm}
      \textbf{If} $\neg\Op{Flag}$
      \hfill
      \textcolor{blue}{$\rhd$ Generate both real and fake accesses}

\item \hspace{0.5cm}
      $\langle R_s \rangle
      \gets
      \Proc{GenSeenAccesses}
      (\langle l_i\rangle,\langle \Seen\rangle,\hat d_i)$

\item \hspace{0.5cm}
      $\langle R_u \rangle
      \gets
      \Proc{GenUnseenAccesses}
      (\langle l_i\rangle,\langle \US\rangle,\hat d_i)$

\item \hspace{0.25cm} \textbf{Else} \hfill
      \textcolor{blue}{$\rhd$ Round 1: only true accesses, all tagged real}

\item \hspace{0.5cm}
      $\langle R_s\rangle\gets\varnothing,\quad
       \langle R_u\rangle\gets\{(1,v): v\in \langle l_i\rangle\}$

\item \hspace{0.25cm}
      \textbf{For} $j\in[\hat d_i]$
      \hfill
      \textcolor{blue}{$\rhd$ For all real and fake neighbors}

\item \hspace{0.5cm}
$\begin{aligned}[t]
\langle \Li\rangle.\Op{Append}\big(
&\Proc{EnumerateTri}(
i,\langle l_i\rangle,\langle R_s[j]\rangle,\langle\Seen\rangle)
\big)
\end{aligned}$

\item \hspace{0.5cm}
$\begin{aligned}[t]
\langle \Li\rangle.\Op{Append}\big(
&\Proc{EnumerateTri}(
i,\langle l_i\rangle,\langle R_u[j]\rangle,\langle\US\rangle)
\big)
\end{aligned}$

\item \hspace{0.25cm} \textbf{EndFor}

\item \hspace{0.25cm}
      $\Op{Flag}\gets0$
      \hfill
      \textcolor{blue}{$\rhd$ Use fake accesses until reset}

\item \hspace{0.25cm}
      $\Proc{UpdateState}(\Seen,\US,R_u,i)$

\item \hspace{0.25cm}
      \textbf{If} $\cnt_u<\hat d_{i+1}$
      \hfill
      \textcolor{blue}{$\rhd$ Reset state}

\item \hspace{0.5cm}
$\begin{aligned}[t]
(\langle\Seen\rangle,\langle\US\rangle)
\leftarrow{}&
\Proc{InitializeState}(
\langle\To\rangle,\hat d_i,n,\\
&\Op{isReset}=1)
\end{aligned}$

\item \hspace{0.5cm} $\Op{Flag}\gets1$

\item \textbf{EndFor}

\item \textbf{Return}
      $\rec\!\left(\Fshuffle(\langle\Li\rangle)\right)$

\end{enumerate}

\end{mybox2}

A technical subtlety is that generating fake accesses consumes entries from the `unseen' set. Consumed unseen vertices are promoted to `seen'. Over time, the unseen pool may shrink until it can no longer support the required number of fake accesses. When this happens, the servers \textit{reset} the seen/unseen data structures and continue. The details of each phase are described in the following subsections.

\begin{figure}[t]
    \centering
    \includegraphics[width=1\linewidth]{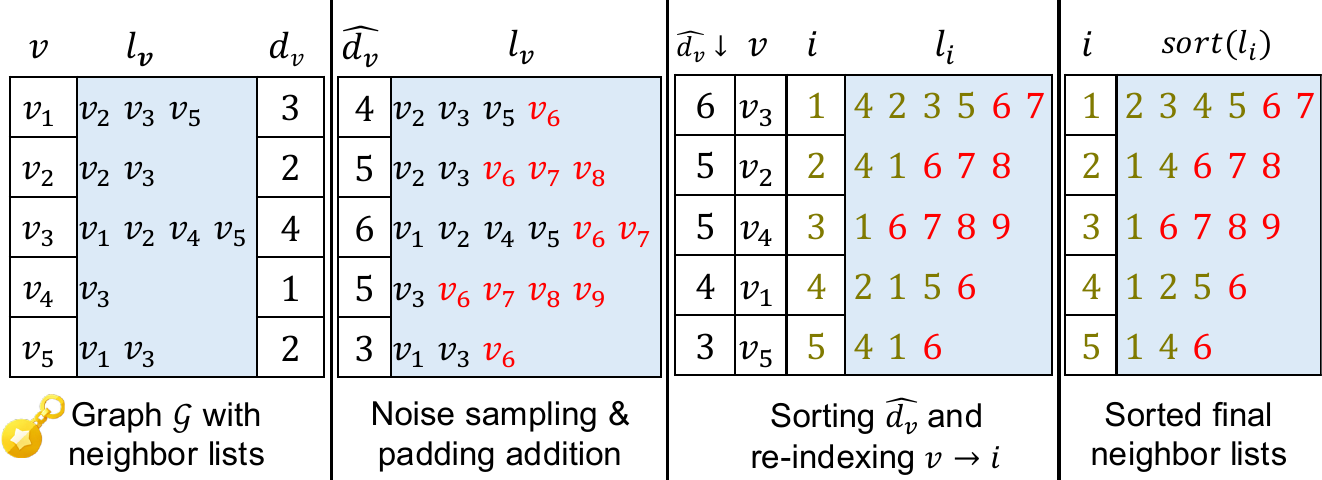}
    \caption{\sysname{} \textbf{Data Collection}: Re-indexing vertices by noisy degree.
Given $\G$ with neighbor lists, each party adds noise to local degrees and publishes $\hat d_v$.
Each party then pads its neighbor lists with dummy IDs (values $>5$ here), and finally re-indexes vertices by the degree-sorted order induced by $\hat d_v$.}
\vspace{-5mm}
    \label{fig:degree}
\end{figure}
\subsubsection{Data Collection}\label{sec:datacollection}
Recall that $m$ data-owning parties $\Pa_j$ each hold a disjoint subset of vertices $V_j \subseteq V$ and their incident edges. Each party locally perturbs the degree of every owned vertex $v \in V_j$ with non-negative, integral, bounded noise. Specifically, for each $v$, the party samples
$\eta_v \sim \mathsf{TSDLap}\!\left(\frac{2}{\epsilon},
t=\left\lceil 2 + \frac{2}{\epsilon}\log\frac{2}{\delta}\right\rceil\right)$
and publishes $\hat{d}_v=d_v+\eta_v$, making the noisy degree public. These degrees induce a global ordering: let $v_1,\ldots,v_n$ be sorted by decreasing noisy degree, breaking ties by original ID. This defines the public total order $\prec$, where
$v_i \prec v_j \Longleftrightarrow \hat d_{v_i}>\hat d_{v_j}$ (or they tie and $v_i$ has smaller ID).
Vertices are then \emph{re-indexed} so that $v_i$ receives public logical identifier $i$, which we use throughout (Fig.~\ref{fig:degree}). As we show later, this public re-indexing allows \sysname{} to eliminate many no-op operations, substantially reducing the cost of subsequent computations.

Next, each party submits secret-shared neighbor lists for its vertices, padded to their published noisy degrees. We reserve $[n+1,n+2t]$ for dummy vertices, where $2t$ is the maximum noise. For each $v$, the party samples $\eta_v$ dummy IDs, appends them to $l_v$, remaps real neighbors to their re-indexed IDs, and sorts the list. Thus, real neighbors appear first (sorted by $\prec$), followed by dummy entries (Fig.~\ref{fig:degree}).
\begin{tcolorbox}[
colback=gray!10!white, breakable,
colframe=black,
boxrule=0.6pt,
arc=1mm
]
\footnotesize
The key contribution of the data collection phase is the observation that each party can locally perturb and \textit{publicly} release DP-noisy degrees. This enables the high-level two-loop structure of Alg.~\ref{alg:tri_plain}: the outer-loop progression becomes public, unlocking \sysname's ability to leverage the graph's sparsity. Only topology-dependent inner-loop accesses need to protected throughout the rest of the computation. 
\end{tcolorbox}

\subsubsection{Setup}
\label{sec:protocol:setup}
The servers run a one-time setup phase to (1) ensure all submitted neighbor lists are well-formed and (2) initialize the global state for loop execution.

\noindent\textbf{Well-formedness Check.}
A malicious client may submit malformed neighbor lists to corrupt the computation. The servers therefore perform a one-time MPC-friendly check with three components:
\\
(i) \textit{Sortedness.} As in Sec.~\ref{sec:datacollection}, each list must be strictly increasing. The servers securely scan each list and verify that every entry lies in $[n+2t]$ and exceeds its predecessor.
\\
(ii) \textit{Bounded Dummy Padding.} Since lists are padded to their published noisy degrees, a malicious party could inflate degrees to force arbitrarily long lists. Although dummies are later filtered out, they increase cost. We therefore use \Op{TSDLap}, which bounds noise to $[0,2t]$, and enforce at most $2t$ dummy identifiers (entries $\ge n+1$) per list.
\\
(iii) \textit{Global Edge Consistency.}
Every real edge $(u,v)$ in an undirected graph must appear twice, once in each endpoint's list. A corrupted party can violate this invariant across parties, breaking correctness while passing per-list checks. Rather than comparing all $O(n^2)$ matrix entries, \sysname{} uses a counting-based test. Each edge receives a canonical encoding $\mathcal{E}$ identical for both orientations, e.g.,
$\mathcal{E}(u,v)
  = \min(u,v)\cdot B + \max(u,v); B = 10^{\lceil \log_{10} n \rceil}$.
The servers build a secret-shared list $\Q$ of these encodings and verify that every real-edge encoding appears exactly twice. Dummy neighbors ($>n$) are obliviously mapped to a placeholder $p$. Let $e=\sum_i \hat d_i$; if padding makes $e$ odd, one placeholder is appended so $|\Q|$ is even. The servers then securely shuffle, sort, and scan $\Q$: every consecutive pair $\Q[2k]$ and $\Q[2k+1]$ must match. Any mismatch indicates an inconsistent edge and causes an abort.
\noindent\textbf{Initialization of Global State.}
\sysname{} maintains three MPC-friendly data structures forming the global state:
\\
(i) \textit{Outer-Loop Structure} $\To$.
The outer loop iterates over vertices in the public order $\prec$ induced by the noisy degrees. The servers initialize
$\langle \To[i] \rangle = (i,\; \langle l_i \rangle)$,
where $l_i$ is the secret-shared, padded neighbor list of vertex~$i$.
\\
\noindent(ii) \textit{Tracking Unseen Vertices} $\US$. During iteration $i$, the protocol must fetch every vertex in $l_i$ without revealing which entries are real neighbors. Even revealing the noisy degree of a fetched vertex could reveal its identity, since noisy degrees are public. To prevent this, the servers pad every neighbor list to the global maximum noisy degree $\hat d_1$ and append $2t$ dummy neighbor lists for identifiers in $[n{+}1,n{+}2t]$. Thus, fetched records have identical lengths, and dummy lookups are indistinguishable from real ones.

To support $O(1)$ secure lookups over this padded collection, the servers build a secret-shared hash table $\Tu$ (Alg.~\ref{alg:hashtable}). It uses a secret-shared universal hash function $H(y)=(ay+b)\bmod p$, where random coefficients $a,b$ are additively shared between the servers. Thus, to either server, hash locations are information-theoretically uniform, yielding an oblivious hash table~\cite{10.1145/100216.100289,cryptoeprint:2024/011}.

The servers also maintain a secret-shared list $D$ of all vertex IDs and a counter $\cnt_u$ tracking the number of unseen vertices. The key invariant is that $D[1..\cnt_u]$ contains exactly the currently unseen vertices; initially, $\cnt_u=n$, so all vertices are unseen. As vertices are accessed, they are removed from this active prefix and promoted to the seen set. To support these removals efficiently, each entry in $\Tu$ stores a pointer to the corresponding vertex's position in $D$, enabling $O(1)$ deletion (see Sec.~\ref{sec:protocol:round} for details). The pointers are initialized alongside $\Tu$ by constructing $D$ in the same order, introducing no additional leakage. Finally, to sample unseen vertices without revealing their positions in $D$, the servers maintain a second secret-shared array $D_u$, containing a freshly shuffled copy of the active prefix $D[1..\cnt_u]$. Thus, $D_u$ is used for oblivious sampling, while $D$ supports efficient updates. Together, these form the unseen data structure $\US=(\Tu,D,D_u,\cnt_u)$.
\\
\noindent(iii) \textit{Tracking Seen Vertices} $\Seen$.
\sysname{} also maintains
$\Seen=(\Ts,D_s,cnt_s)$ for vertices retrieved in prior iterations, where $D_s$ stores seen vertex IDs, $\Ts$ stores their neighbor lists for $O(1)$ secure lookup, and $cnt_s$ tracks the number of seen vertices. Initially, $cnt_s=0$, $D_s$ and $\Ts$ are empty. As the protocol progresses, vertices migrate obliviously from $\US$ to $\Seen$ and are retrieved from $\Ts$ during loop execution.

\begin{tcolorbox}[
colback=gray!10!white, breakable,
colframe=black,
boxrule=0.6pt,
arc=1mm
]
\footnotesize
The main contributions of the setup phase are:
\begin{itemize}[leftmargin=1em,itemsep=0pt]
\item First, \sysname{} provides an efficient $O(|E|\log |E|)$ check for detecting malformed inputs, improving substantially over the naive $O(|E|^2)$ approach.
\item Second, \sysname{} introduces three custom data structures that enable efficient execution of the two-loop algorithm. Public outer-loop retrievals use a hash table $\To$. The key challenge--and contribution--is making inner-loop retrievals oblivious without generic ORAM. \sysname{} exploits a crucial structural insight: vertices need only be distinguished as previously accessed (``seen") or not yet accessed (``unseen"). This yields two specialized oblivious data structures: a growing seen pool and a shrinking unseen pool, initially containing all vertices.
\end{itemize}

\end{tcolorbox}

\subsubsection{Loop Execution}
\label{sec:protocol:round}
\sysname{} proceeds in $n$ rounds, following the two-loop structure of
\autoref{alg:tri_plain}. Each round processes a single vertex $i$ in the order
$\prec$ induced by the noisy degrees. The round computes all neighborhood
intersections involving $i$ while ensuring that the observable memory-access
pattern depends \emph{only} on its noisy degree $\hat d_i$, never on the true
graph structure. The core novelty of \sysname{} is twofold:
(1) a new access-pattern hiding mechanism that issues exactly \emph{one} fake
access per real access, avoiding the $\Omega(\log n)$ overhead of generic ORAM
by exploiting our algorithm's structure; and (2) a lightweight MPC procedure
that reduces no-ops for triangle enumeration by optimizing the
neighborhood-intersection logic.

\noindent \textbf{Fake Access Generation.}
For efficiency, \sysname{} performs inner-loop retrievals via secret-shared hash tables. While this enables $O(1)$ lookups, it creates a leakage channel. In round~1, the servers see only random accesses to $\Tu$, revealing nothing. From round~2 onward, repeated accesses can reveal structure: for example, re-accesses in round~2 expose how many neighbors vertices~1 and~2 share. Over rounds, such patterns can cumulatively leak the graph. One mitigation is to rebuild the hash tables each round, but with the table size $n\hat d_1$ this costs $O(n^2\hat d_1)$. Another is generic ORAM-style hiding, but it incurs an $\Omega(\log n)$ overhead per access. We include an experimental evaluation of both baselines in Sec.~\ref{sec:evaluation}. Instead, \sysname{} adopts a hybrid approach grounded in an important observation: the access pattern in \autoref{alg:tri_plain} is highly structured. Round $i$ performs exactly $\hat d_i$ lookups, where $\hat d_i$ is public. Thus, the leakage comes from whether
each lookup corresponds to a vertex whose neighbor list was already retrieved
in a previous round (``seen'') or is being accessed for the first time
(``unseen''). This seen/unseen imbalance directly reveals overlap in neighborhoods.

\update{Our \textbf{core idea} is to \textit{equalize} this view with only \emph{one} fake access per real access: in round $i$, the servers observe exactly (i) $\hat d_i$ accesses sampled uniformly from the \textbf{seen} set and (ii) $\hat d_i$ sampled uniformly from the \textbf{unseen} set, independent of vertex~$i$'s true neighborhood. To achieve this, we maintain two disjoint structures: (i) a growing \textbf{seen pool} $\Seen$, and (ii) a shrinking \textbf{unseen pool} $\US$, which initially contains all $n$ vertices. In principle, both pools could be implemented using dynamic distributed ORAMs. However, to the best of our knowledge, existing practically efficient distributed ORAM implementations \cite{doerner2017scaling,vadapalli2023duoram,falk2023gigadoram} do not support dynamic size in the malicious-security setting with two-servers. For our concrete instantiation, we therefore implement $\Seen$, as mentioned above, using an oblivious hash table $\Ts$ that is securely shuffled and rebuilt in every iteration. For the unseen pool, each vertex is accessed at most once before being promoted to $\Seen$, making recent OSAM constructions~\cite{heath20262pc,pia2026oblivious} a more efficient alternative to generic ORAM. However, available implementations~\cite{heath20262pc} support only fixed 128-bit records, which is unsuitable for our neighbor-lists. We therefore implement $\US$ using a custom oblivious data structure built around another oblivious hash table $\Tu$. We emphasize that these are \emph{implementation choices}, driven by the practical efficiency of currently available constructions. The algorithmic design of \sysname{} is \textit{agnostic} to how $\Seen$ and $\US$ are instantiated, allowing more efficient constructions--including future ORAM-based ones--to be substituted directly.}

$\Ts$ is rebuilt once per round to eliminate duplicate touches. Its size remains small: by round $i$ it contains only $\sum_{j \le i} \hat d_j$ vertices, so rebuilding is far cheaper than a full-table rebuild, and periodic refreshes further reduce amortized cost (see `Reset State''). In contrast, $\Tu$ can be reused across rounds since unseen vertices are never accessed twice. Together, these choices let \sysname{} support exactly one fake access per real access while hiding structural dependencies. The remaining challenge is to issue accesses that (i) generate $\hat d_i$ \textit{distinct} random seen and $\hat d_i$ \textit{distinct} random unseen vertices, (ii) include all real neighbors of $i$ for correctness, and (iii) make fake accesses indistinguishable from real ones.


\begin{figure}[t]
    \centering
    \includegraphics[width=1\linewidth]{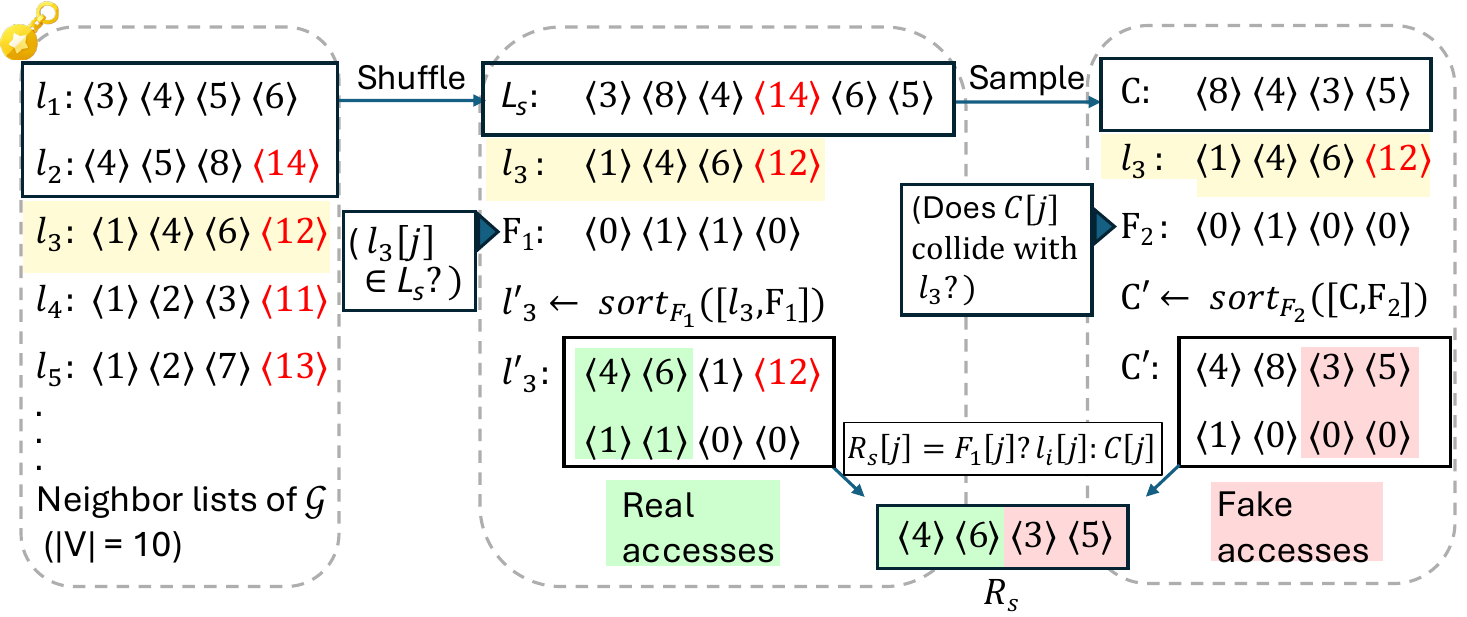}
    \caption{\sysname{}: \textbf{Seen request $\Rs$ generation}. For $\ell_3$, the protocol outputs real accesses $u\in\ell_3$ with $\F_1(u)=[u\in\Seen]$, and fake accesses using candidates $c\leftarrow\Seen$ with $\F_2(c)=[c\in\ell_3]$ set to $0$ (non-colliding), yielding the final mixed access sequence. The same procedure is applied for the unseen cache (e.g., for $1$ and $12$). $\langle$Red$\rangle$ means dummy nodes.}

    \label{fig:seen-access}
    \vspace{-5mm}
\end{figure}

\sysname{} achieves this via a custom MPC protocol. We explain the construction of the seen-access list $\Rs$ (Fig.~\ref{fig:seen-access}). Let $b$ denote the number of neighbors of $i$ that are already in the seen set $\Seen$. The goal is to construct $\Rs$ with exactly $\hat d_i$ entries: all $b$ real seen neighbors, plus $\hat d_i-b$ fake accesses.

The servers first securely sample $\hat d_i$ distinct candidate fake vertices from $\Seen$, denoted $\C$. They then compute two secret-shared indicator flags:
\begin{packeditemize}
\item $\F_1[j]$ indicates whether the $j$-th entry of $l_i$ is a real seen neighbor,
\item $\F_2[j]$ indicates whether the $j$-th candidate in $\C$ also appears in $l_i$, and would therefore duplicate a real neighbor.
\end{packeditemize}

Using these flags, the servers form $\N_i'[j]=(\F_1[j],l_i[j])$ and $\C'[j]=(\F_2[j],\C[j])$, and securely sort both lists by their flag bits in descending order. This moves the $b$ real seen neighbors to the front of $\N_i'$ and the colliding fake candidates to the front of $\C'$. The servers can therefore construct $\Rs$ by taking the real neighbors from the front of $\N_i'$ and filling the remaining $\hat d_i-b$ positions with non-colliding candidates from $\C'$ (i.e., $\Rs[j]\!=\!\F_1[j]\;?\;l_i[j]\!:\!\C[j]$).
Finally, a secure shuffle hides the positions of accesses before $\Rs$ is used for memory access. The unseen-access list is constructed analogously.
\\\noindent\textbf{Triangle Detection.}
After retrieving the $2\hat d_i$ neighbor lists for the round, the protocol computes the triangle contributions involving vertex~$i$. The core operation is to find common neighbors of $i$ and each $v \in l_i$. For enumeration, we output triples $(i,v,w)$ for each $w \in l_i \cap l_v$; for counting, we update only a running tally. To ensure correctness, \sysname{} maintains three classes of flags to suppress invalid contributions:
\\\noindent (i) \textit{Dummy Padding Flags.}
Neighbor lists are padded to the noisy degree, so some entries are dummy identifiers; these must never contribute to triangle counts.
\\\noindent (ii) \textit{Fake-access Flags.}
Both seen and unseen request lists include fake vertices used only to hide access patterns; any triple involving a fake vertex must be filtered out.
\begin{figure}
    \centering
    \includegraphics[width=1\linewidth]{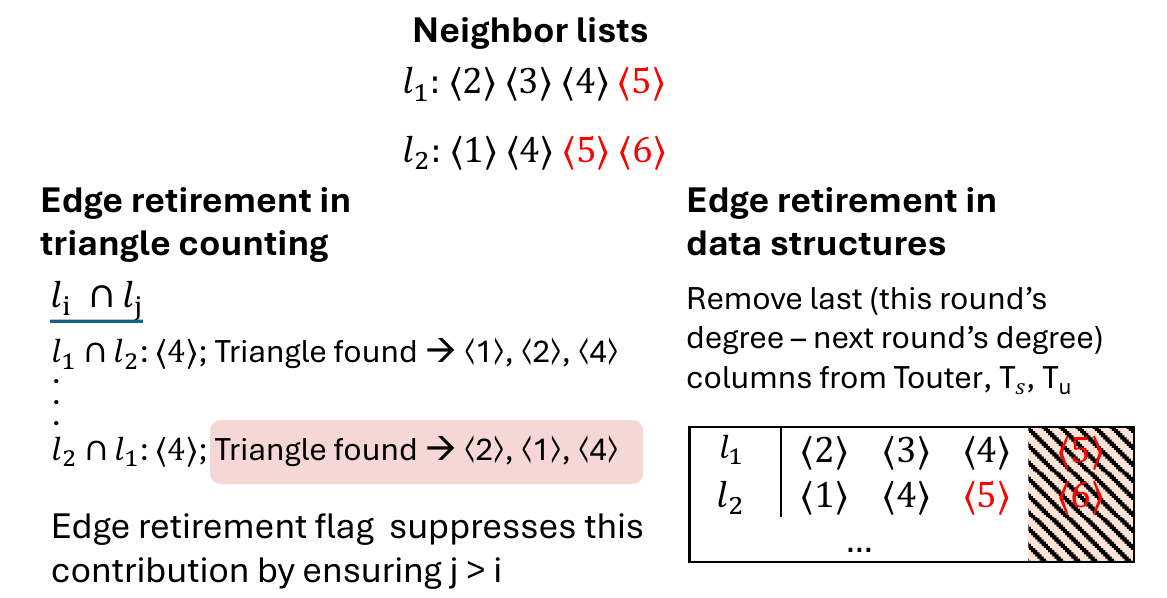}
    \caption{\sysname{}: Edge retirement. $\langle$Red$\rangle$ means dummy nodes.}
    \label{fig:edge-retirement}
    \vspace{-3mm}
\end{figure}
\\\noindent (iii) \textit{Edge Retirement Flags.}
All neighbor lists in $\Tu$ and $\Ts$ are initially padded to the global maximum noisy degree $\hat d_1$ for obliviousness. However, by round $i$, all triangle contributions involving vertex $i$ have been accounted for, allowing $i$ and its incident edges to be ``retired.'' The active padding can then decrease from $\hat d_i$ to $\hat d_{i+1}$. This requires (1) suppressing retired edge contributions and (2) safely removing the corresponding padding. \sysname{} addresses both using the degree ordering~$\prec$. Without retirement, an edge $(i,j)$ with $i<j$ could contribute in both rounds $i$ and $j$. Since vertices are re-indexed by $\prec$ and each neighbor list is sorted accordingly, vertex $v$ can appear only within the first $v$ positions of any neighbor list. This invariant lets \sysname{} construct a retirement flag that suppresses the later contribution of $(i,j)$ in round $j>i$ (see Fig.~\ref{fig:edge-retirement}). For padding, dummy IDs are $>n$ and therefore appear at the end of each sorted list, allowing the servers to remove the final $\hat d_i-\hat d_{i+1}$ elements before round $i+1$.
\begin{figure}
    \centering
    \includegraphics[width=1\linewidth]{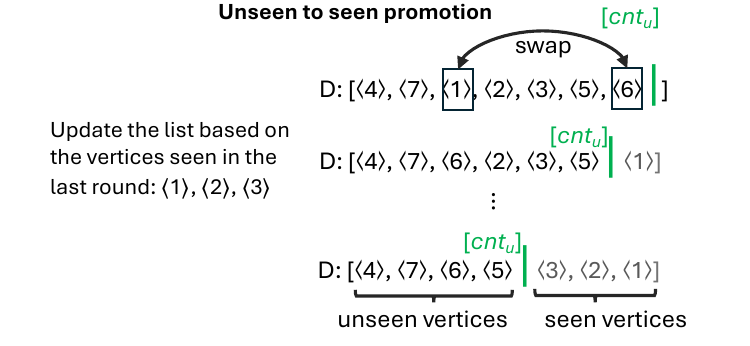}
    \caption{\sysname{}: Unseen to seen promotion in update state.}
    \label{fig:update-state}
    \vspace{-5mm}
\end{figure}

\noindent\textbf{Update State.}
\update{At the end of round~$i$, \sysname{} updates both states by obliviously migrating all vertices newly accessed in this round from the unseen structure $\US$ to the seen structure $\Seen$. Recall that $\US=(\Tu,D,D_u,\cnt_u)$, where the active prefix $D[1..\cnt_u]$ contains exactly the currently unseen vertices, while $\Seen=(\Ts,D_s,\cnt_s)$. Each entry in $\Tu$ stores the vertex's neighbor list together with a pointer to its current position in $D$. A na\"ive oblivious deletion from $D$ would require a linear scan. Instead, \sysname{} exploits its prefix representation to support $O(1)$ deletion via \emph{swap-delete}. For each newly seen vertex sampled from $D_u$ in round $i$, its lookup in $\Tu$ returns its position in $D$. The servers swap this entry with the last active entry $D[\cnt_u]$, decrement $\cnt_u$ and increment $\cnt_s$ (Fig.~\ref{fig:update-state}). After all newly seen vertices are removed, the active prefix $D[1..\cnt_u]$ forms the new unseen list, while the removed suffix forms the seen list. The servers securely shuffle both to obtain fresh $D_u$ and $D_s$ for the next round. Since subsequent access generation uses only these freshly shuffled lists, the updates reveal no additional information.

Finally, the servers rebuild the seen hash table $\Ts$ to hide repeated accesses to seen vertices. Crucially, the much larger unseen table $\Tu$ need not be rebuilt: each unseen vertex is accessed at most once before being promoted to $\Seen$ and future unseen candidates are sampled from the freshly shuffled $D_u$ rather than directly from $\Tu$. Thus, $\Tu$ can be safely reused across rounds as long as sufficiently many unseen vertices remain for fake-access generation (explained below), avoiding an expensive rebuild of the large unseen table every round.}
\\\noindent\textbf{Reset State.} Since we traverse vertices in decreasing noisy-degree order in the outer loop (i.e., $\hat{d}_i > \hat{d}_{i+1}$), we always have enough seen entries to support the required number of fake accesses. However, fake-unseen sampling continually consumes entries from the unseen pool, and eventually the pool may become too small.
When this occurs--specifically, when $\cnt_u < \hat d_i$--the servers perform a global \emph{reset}. During a reset, (1) $\US$ is restored to contain the full vertex universe: the hash table $\Tu$ is rebuilt over all vertices and $D$ is reset to the complete list of vertices, and (2) $\Seen$ is cleared.

\begin{tcolorbox}[
colback=gray!10!white, breakable,
colframe=black,
boxrule=0.6pt,
arc=1mm
]
\footnotesize
In the loop execution phase, \sysname{} introduces three key ideas that drive its performance gains:
\begin{itemize}[leftmargin=1.2em,itemsep=2pt]

\item First, \sysname{} introduces a new approach to access-pattern hiding tailored to graph computation: only one fake access per real access is enough to hide topology-dependent access patterns, replacing generic ORAM's $\Omega(\log n)$ per-access overhead with constant overhead.

\item Second, \sysname{} introduces customized oblivious data structures that support efficient updates while avoiding costly full-table rebuilds. In particular, a custom oblivious swap-delete mechanism enables $O(1)$ deletion of consumed unseen vertices.
\item \sysname{} further turns the publicly released noisy degrees into another powerful computational optimization: leveraging the induced public ordering $\prec$, \sysname{} introduces a novel technique to eliminate a significant fraction of no-ops, substantially reducing the overall cost.
\end{itemize}
\end{tcolorbox}
\begin{mybox2}{\algotitle{alg:c4_plain}{\textsc{QuadrangleDetect}}}
\footnotesize
\textbf{Input:} Undirected graph $\G=(V,E)$ with neighbor lists $\{\ell_v\}_{v\in V}$, $|V|=n$. \\
\textbf{Output:} List $\mathcal{L}$ of 4-cycles and count $\mathit{cnt}$.
\begin{enumerate}[nosep, label=\arabic*.]
\item $\mathcal{L} \gets [\,]$, \;\; $\mathit{cnt} \gets 0$
\item Sort vertices as $v_1,\ldots,v_n$ such that $d_{v_1}\ge \cdots \ge d_{v_n}$ (break ties by id), and relabel them so that $v_i$ has id $i$.
\item \textbf{For} $i\in[n]$:
\\ \hspace*{1.2em} Initialize $U[w]\gets \emptyset$ for all $w\in V$
\\ \hspace*{1.2em} \textbf{For each} $u \in \N_i$ with $u>i$:
\\ \hspace*{2.4em} \textbf{For each} $w \in \N_u$ with $w>i$:
\\ \hspace*{3em} $U[w] \gets U[w]\cup\{u\}$
\\ \hspace*{3em} \textbf{If} $|U[w]|\ge 2$:
\\ \hspace*{4em} $\mathcal{L}.\text{Append}(i,w,U[w])$ \hfill \textcolor{blue}{$\rhd$ encodes cycles $(i,u_1,w,u_2)$, with $u_1, u_2 \in U[w]$}
\\ \hspace*{4em} $\mathit{cnt} \gets \mathit{cnt} + \binom{|U[w]|}{2}$
\item \Return $(\mathcal{L},\mathit{cnt})$
\end{enumerate}
\end{mybox2}

\section{\sysname{}: Quadrangle Detection}
\label{sec:quadrangle}

\subsection{Technical Overview}
As before, we begin with an intuitive explanation of the algorithm and then describe
how it is translated into MPC. For quadrangle detection, the key idea is to reduce the problem to identifying
vertex pairs that share at least two distinct common neighbors. Fix a pair
$(i, w)$. A quadrangle $(i, u_1, w, u_2)$ exists exactly when both $i$
and $w$ share two distinct neighbors $u_1, u_2 \in \N_v \cap \N_w$. 
Equivalently,
$w$ participates in a quadrangle with $i$ when it is reachable as a two-hop neighbor of
$i$ via at least two different intermediate vertices in $l_i$. Alg.~\ref{alg:c4_plain} implements this by iterating over each vertex $v$ and exploring all two-hop paths $i \to u \to w$ with $u \in \N_v$ and $w \in \N_u$. The algorithm maintains an auxiliary structure $U[\cdot]$ where $U[w]$ collects all intermediates $u$ that realize such two-hop paths. Once $|U[w]| \ge 2$, we have certified the presence of a quadrangle involving $(v,w)$. The algorithm then emits a record $(i,w,U[w])$, from which every unordered pair of distinct vertices in $U[w]$ yields a quadrangle $(i,u_1,w,u_2)$.


\begin{figure}[t]
    \centering
    \includegraphics[width=1\linewidth]{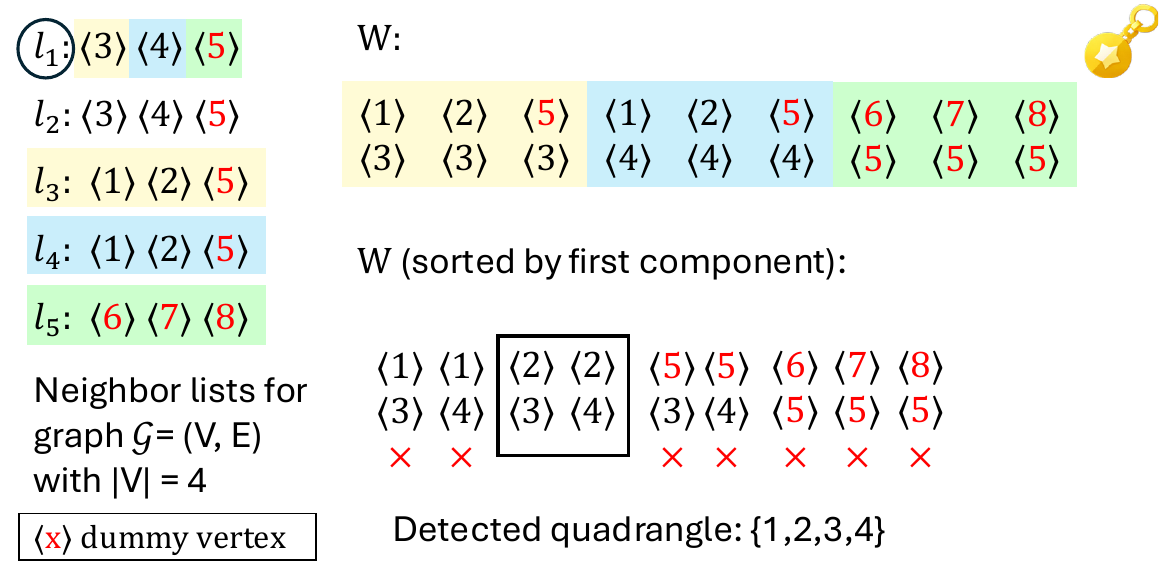}
   \caption{\sysname{}: quadrangle detection (example). For vertex $1$, we iterate over neighbors ($3,4,5$), fetch $\ell_3,\ell_4,\ell_5$, and append into $W$: $(u,3)$ for each $u\in\ell_3$, then $(u,4)$ for each $u\in\ell_4$, and so on-- encoding paths $1\!\to\!3\!\to\!u$, $1\!\to\!4\!\to\!u$, and $1\!\to\!5\!\to\!u$. We sort $W$, compute run lengths, ignoring dummy (red) vertices and any $u\le 1$ to prevent double counting.}
   \vspace{-4mm}
    \label{fig:W}
\end{figure}

\subsection{Protocols}
At a high level, the core technical contribution is the construction of a custom MPC-friendly, data structure that captures the essence of \autoref{alg:c4_plain}. Quadrangle detection follows the same two-loop structure as triangle detection, differing only in the per-round detection logic. The three phases are unchanged: \textit{data collection} (Sec.~\ref{sec:datacollection}) is identical; the one-time \textit{setup} (Sec.~\ref{sec:protocol:setup}) runs the same well-formedness checks and initializes global state, with one additional structure $\W$; and \textit{loop execution} (Sec.~\ref{sec:protocol:round}) keeps the same workflow. The outer loop processes vertices in decreasing noisy-degree order, where $i$ denotes the vertex handled in round $i$ (after re-indexing) and $\hat d_i$ its noisy degree. The inner loop retrieves neighbors of $i$ using the same fake access technique.

The key difference is how retrieved neighbors are used to detect cycles. Quadrangle detection must find, for each pair $(i,w)$, all intermediates $u$ forming two-hop paths $(i,u,w)$. The cleartext algorithm (\autoref{alg:c4_plain}) uses a position-indexed table $U[w]$, but directly implementing $U$ in MPC would leak access patterns through position-dependent inserts. \sysname{} instead uses an \textit{append-only} secret-shared list $\W$. In round $i$, for each $w \in l_i$, we append all tuples $\{(w,u)\mid u \in l_w\}$ to $\W$, then securely sort $\W$ by endpoint $w$. After sorting, a maximal run of $w$ of length $k$ represents $k$ two-hop paths from $i$ and contributes $\binom{k}{2}$ quadrangles $(i,u_1,w,u_2)$. We apply the same filtering flags as for triangles (dummy-padding, fake-access, and edge-retire flags) and additionally drop tuples $(i,\cdot)$, since $w=i$ yields a degenerate self-loop. Fig.~\ref{fig:W} shows an example. For counting, we store only $w$ rather than $(w,u)$ and sum $\binom{k}{2}$ over all maximal runs.

\vspace{-1mm}
\section{Implementation Optimizations}\label{sec:optimizations}
\noindent\textbf{Cheap Non-Linear Operations using FSS.}
\sysname{} frequently uses comparisons $(\geq,\leq,=,\neq)$ over secret-shares, which is expensive over MPC because comparisons require multiple rounds of communication. To resolve this, we use function secret sharing, yielding an essentially one-round online cost, similar to MPC with FSS gates~\cite{cryptoeprint:2020/1392,cryptoeprint:2019/1095,wagh2022pika}. 

Consider testing whether $x \ge y$ for secret-shared $x,y \in [k]$, and let $z=x-y$. Suppose we can convert $\langle z\rangle$ into shares of the DPF $f_{z,1}$ over the domain $[\B=2k]$, where indices $0$ to $k-1$ represent non-negative integers and $k$ to $2k-1$ represent negative integers under modular wrap-around. Expanding this DPF yields a one-hot vector $\bold{e}_{z,1}$ with $\bold{e}_{z,1}[i]=1$ exactly at $i=z \bmod \B$. Then checking $z \ge 0$ reduces to summing $\bold{e}_{z,1}$ over the non-negative half of the domain; equivalently,
$
  x < y \;\Longleftrightarrow\; \sum_{i=0}^{k-1} \bold{e}_{z,1}[i] = 0.
$
Other comparisons ($x>y$, $x=y$, $x\neq y$) follow similarly. The remaining challenge is efficiently converting a secret-shared value $\langle z \rangle$ into shares of its corresponding DPF. To do so, we use a preprocessed correlated-randomness setup via \emph{verifiable FSS} (VFSS), which lets the parties check that the expanded vector is a valid one-hot encoding. The dealer samples a random value $r \in \mathbb{\F}$ and provides VFSS keys for the DPF $f_{r',1}$ with $r' = r \bmod 2n$, along with secret shares of $r$. Because \sysname{} uses SPDZ-style MACs (Sec.~\ref{sec:background}), the expanded one-hot values must also be authenticated. Since a DPF encodes a point function, its MAC shares also form a DPF, so the dealer additionally provides VFSS keys for $f_{r',\alpha}$, where $\alpha$ is the global MAC key. As noted in Sec.~\ref{sec:setting}, this is natural since the same dealer also provides the global MAC key for the rest of the computation. The servers evaluate VFSS keys for $f_{r',1}$ over the full domain to obtain the expanded one-hot vector $\bold{e}_{r',1}$. In the online phase, they open $a = (z - r) \bmod 2n$ and use it to ``shift'' the preprocessed vector $\bold{e}_{r',1}$, thereby obtaining DPF shares of $z$ via
$
  \bold{e}_{z,1}[j] = \bold{e}_{r',1}[j - a].
$
The same cyclic shift is applied to the MAC-DPF $f_{r',\alpha}$, ensuring the resulting shares remain authenticated. In the malicious setting, we also verify that the dealer provided well-formed VFSS keys and consistent shares of $r$, via a lightweight malicious-sketching protocol. Importantly, evaluating VFSS keys, expanding into one-hot vectors, and checking well-formedness all occur in preprocessing. Thus, the online cost of each comparison is extremely small: a single opening plus a few local operations (full details in App.~\ref{app:FSS}).
\\\noindent\textbf{Parallelization.}
The dominant cost in our protocol is cycle detection, for both triangles and quadrangles, and this work is highly parallelizable. Recall from Sec.~\ref{sec:protocol:round} that in round $i$, the servers first retrieve the $2\hat d_i$ neighbor lists (real and fake), then compute cycle contributions involving $i$. Crucially, after retrieval, detection depends only on the lists retrieved in that round. We therefore decouple execution into two stages: (1) sequentially perform the oblivious retrievals for each $i$ and record the retrieved lists; and (2) compute cycle contributions from these lists. Stage (2) is embarrassingly parallel, as each $i$ can be processed independently (App.~\ref{app:parallel}).

\section{Security Analysis}\label{sec:security}
We use the simulation paradigm~\cite{Oded} to establish \sysname's security. This paradigm defines two worlds: a real execution among the parties, and an ideal execution where a functionality $\mathcal{F}$ receives all inputs and directly computes the output. A simulator $\mathsf{Sim}$ interacts with $\mathcal{A}$ and $\mathcal{F}$ to generate $\mathcal{A}$'s view. If $\mathcal{A}$ cannot distinguish the real and ideal executions, the protocol securely realizes $\mathcal{F}$ and leaks nothing beyond what $\mathcal{F}$ permits.

The ideal functionality for \sysname{} is shown in Fig.~\ref{fig:IF}. A key distinction from Matthew Groce et al.~\cite{Groce2019CheaperPS} is that the DP mechanism is not executed inside MPC or revealed via the output. Instead, it occurs before the protocol: each data-owning party inputs its noisy degrees directly into MPC. Thus, any leakage arises only from these DP-sanitized inputs, while the MPC reveals nothing beyond the designated outputs. Accordingly, the ideal functionality treats $\hat d_v$ as part of each party's input.

\begin{mybox2}{Ideal Functionality}
\footnotesize
\textbf{Parameter.} $\epsilon, \delta$ - Privacy parameters
\\\textbf{Parties.} $m$ Data owners $P_j, j \in [m]$ holding partition of vertex sets $V_j$ and the incident edges $E_j$; Servers $\Ser_0$ and $\Ser_1$ 
\\
\textbf{Functionality.}
\begin{packeditemize}
\item Each honest party $P_j$ samples $\eta_v \sim \text{TSDLap}\!\left(\tfrac{2}{\epsilon},\, \lceil 2 + \tfrac{2}{\epsilon}\log(2/\delta) \rceil\right)$ and computes $\hat{d}_v=d_v+\eta_v$; corrupted parties receive values $\hat{d}_v$ from the adversary
\item Receive from party $P_j$ its neighbor lists and the values $(\ell_v, \hat{d}_v)$ for all $v \in V_j$
\item Let $V=\bigcup_{j=1}^m V_j$, $E=\bigcup_{j=1}^m E_j$ $n=|V|$ and $t=\lceil 2+2/\epsilon\log(2/\delta)\rceil$
\item If $\exists v \in V$ s.t. $\hat{d}_v \not \in [d_v, d_v+2t]$, send ABORT 
\item Construct the adjacency matrix $A$ of the entire graph over $V$ using the received neighbor lists and checks if $A_{jk}=A_{kj}, \forall j,k \in [n]$, send ABORT 
\item Run \autoref{alg:tri_plain} on the entire graph $\G=(V,E)$ to obtain $(cnt_{T}, \mathcal{L}_{T})$ for triangles and run \autoref{alg:c4_plain} on $\G$ to obtain  $(cnt_Q, \mathcal{L}_Q)$
\item Shuffles $\mathcal{L}_T$ and $\mathcal{L}_Q$
\item Sends $cnt_{T}, cnt_{Q}$ (or the lists $\mathcal{L}_Q$, $\mathcal{L}_T$) to the servers
\end{packeditemize}
\end{mybox2}
 \noindent\begin{minipage}{\columnwidth}
\captionof{figure}{Ideal functionality of \sysname.{}}\label{fig:IF}
\end{minipage}

Although we operate in the malicious threat model, we do not require extra data-owner checks to verify correct noise sampling. In our distributed setting we effectively use the \emph{local DP} model: each party adds noise independently, and every honest party $P_j$ obtains an $(\epsilon,\delta)$-edge local DP guarantee from $\mathsf{TSDLap}$ for its subgraph $G_j$, regardless of malicious behavior. Thus, malicious parties \emph{cannot} weaken the privacy of honest parties. Intuitively, parties release noisy degrees, then submit secret-shared, dummy-padded neighbor lists consistent with these degrees as input to MPC. The MPC computation reveals nothing beyond what is already implied by the noisy degrees and the final cycle counts/lists.\\

Coming to utility (accuracy), note that the dummies added for DP do not \textit{affect} triangle or quadrangle computation, since they are filtered out in MPC. A malicious party could try to degrade utility by falsifying an edge shared with an honest party, but our edge-consistency checks prevent this (Sec.~\ref{sec:protocol:setup}). The main remaining attack is to claim an extremely large noisy degree (e.g., $n$), inflating cost via no-ops, but our bounded-dummy-padding rule limits any claimed noisy value to a verifiable range implied by the noise distribution. Formally:

\begin{theorem}
\sysname~achieves the ideal functionality in Fig. \ref{fig:IF} with $(\epsilon, \delta)$-edge local DP leakage in the presence of a static, malicious adversary
corrupting one server and an arbitrary number of data owing parties in the $(\mathcal{F}_{\textsf{Sort}},\mathcal{F}_{\textsf{Shuffle}}, \mathcal{F}_{\textsf{Rand}})$-hybrid model (proof is in App. \ref{app:security}). \label{thm:security}
\end{theorem}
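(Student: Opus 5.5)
We prove Theorem~\ref{thm:security} with the simulation paradigm. For a static adversary $\mathcal{A}$ that corrupts one server $\Ser_b$ and any set of data owners, we build a simulator $\mathsf{Sim}$ that talks to the ideal functionality of Fig.~\ref{fig:IF} and outputs a view of $\mathcal{A}$ indistinguishable from the real one. The DP claim is handled separately. Honest noisy degrees $\hat d_v$ are produced exactly as in the protocol, inside the functionality, so the $(\epsilon,\delta)$-edge local DP guarantee of each honest $G_j$ follows from the TSDLap mechanism of Sec.~\ref{sec:background}. By post-processing, this guarantee is unaffected by anything a malicious party does. The cryptographic part then only has to show that the view is simulatable from the public values $n$ and $\{\hat d_v\}$ plus the output.

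\textbf{Input extraction and setup.} $\mathsf{Sim}$ plays the honest server and the honest owners. When a corrupted owner shares its input, $\mathsf{Sim}$ controls the randomness $r_{1-b}$ and has seen the masked value $y$, so it reconstructs $\ell_v$. This is standard for the SPDZ-style input procedure of Sec.~\ref{sec:setting}. It forwards the extracted $(\ell_v,\hat d_v)$ to the functionality. For honest owners, $\mathsf{Sim}$ hands $\Ser_b$ uniformly random shares of padded lists of length $\hat d_v$; these are perfectly indistinguishable by the linearity and uniformity of additive shares.

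We then argue that \Proc{WFCheck} aborts exactly when the functionality aborts, except with negligible probability. Check (ii) corresponds to $\hat d_v\in[d_v,d_v+2t]$. Check (iii) corresponds to $A_{jk}=A_{kj}$: after the $\mathcal{F}_{\textsf{Sort}}$ call, pairing consecutive equal encodings succeeds iff every real encoding appears exactly twice, which is exactly symmetry. Every value opened in these checks is a comparison bit that equals $1$ in any non-aborting execution, so $\mathsf{Sim}$ can simulate it.

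\textbf{Loop execution, the main obstacle.} $\mathsf{Sim}$ has to reproduce the memory-access trace on $\Ts$, $\Tu$, $D$, $D_u$ together with the reset times. We argue round by round. Reset times depend only on $\cnt_u$. Because every round removes exactly $\hat d_i$ vertices from the unseen pool, $\cnt_u$ is a deterministic public function of $\{\hat d_i\}$, so resets are public. In round $i$, the unseen requests $R_u$ consist of $\hat d_i$ distinct elements of the current unseen set, and they are accessed in a position that is uniform after $\mathcal{F}_{\textsf{Shuffle}}$. Moreover, $\Tu$ hashes under a secret-shared universal hash. Unseen vertices are never queried twice between resets, so the observed slots are fresh uniform locations. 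The seen side is analogous: $\Ts$ is rebuilt with a fresh shuffle and fresh hash coefficients each round, and each round issues $\hat d_i$ distinct queries, so its slots are again uniform and independent.

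The delicate point is proving that the multiset of requests has a distribution independent of the true neighborhood. We will show that the construction from $\F_1$, $\F_2$ and the sort yields exactly $\hat d_i$ distinct seen requests for every value of $b$. This relies on $\hat d_i\le\cnt_s$, which holds by the decreasing-degree ordering, and on the non-colliding candidates being enough to fill the $\hat d_i-b$ slots. If this turns out to be tight, we would first try sampling $\hat d_i$ candidates plus an argument that at most $b$ collide. Given this, $\mathsf{Sim}$ samples uniformly random distinct hash slots for both tables. Swap-delete accesses to $D$ go through secret-shared pointers followed by a reshuffle, so those are simulated by random shares as well.

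\textbf{Outputs, MACs, and the hybrid.} All comparisons run through preprocessed VFSS keys with a single opening of $z-r$. Since $r$ is uniform, each opened value is uniform, and the dealer's keys are verified offline, so any ill-formed randomness leads to an abort. At the end, $\mathsf{Sim}$ receives $cnt$ or the shuffled list $\mathcal{L}$ from the functionality. It programs the final opening of $\rec(\Fshuffle(\langle\Li\rangle))$ to match, and the shuffle makes the order uniform. Correctness of the honest output follows from the correctness of Alg.~\ref{alg:tri_plain} and Alg.~\ref{alg:c4_plain} under the three filtering flags; we expect this argument to be routine. Any deviation by $\Ser_b$ on authenticated shares is caught at MAC check except with probability $1/|\mathbb{F}|$, in which case $\mathsf{Sim}$ sends abort. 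A hybrid argument then replaces, one by one, the $\mathcal{F}_{\textsf{Sort}}$, $\mathcal{F}_{\textsf{Shuffle}}$, $\mathcal{F}_{\textsf{Rand}}$ and VFSS transcripts by simulated ones, which concludes the proof.
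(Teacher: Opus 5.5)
Your simulator follows the same route as the paper's proof in App.~\ref{app:security}. Both use random shares for honest inputs, uniformly random distinct slots for lookups in $\Tu$ and $\Ts$, uniform indices into $D$, $D_u$, $D_s$, the output taken from the ideal functionality, and an abort when the dealer's keys fail the sketch. You also add pieces the paper omits: input extraction from corrupt owners, matching \Proc{WFCheck} to the functionality's abort conditions, and the fact that reset times are public.

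A few local points need fixing:
\begin{itemize}
\item $\Rs$ is not distributed independently of the neighborhood, since it always contains the $b$ real seen neighbors. What you need is only that it always consists of exactly $\hat d_i$ distinct keys. That is immediate, because only the $b$ seen entries of $l_i$ can collide with the $\hat d_i$ distinct seen candidates.
\item $D$ itself is never reshuffled; only its copy $D_u$ is. The swap-delete positions are opened to get $O(1)$ deletion. They are uniform because $D$ starts as a shuffled copy and each vertex leaves it once per reset period, which is the paper's argument.
\item \Proc{WFCheck} also aborts on inputs the functionality accepts, such as unsorted or repeated entries and self-loops. So $\mathsf{Sim}$ must itself send abort on such extracted inputs.
\end{itemize}

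The substantive gap, which the paper's proof shares, is the step from ``secret-shared universal hash plus distinct queries'' to ``jointly uniform opened slots.'' The family $H(y)=(ay+b)\bmod p$ is only pairwise independent. Whenever $y_1+y_3=2y_2$, the opened slots satisfy $H(y_1)+H(y_3)\equiv 2H(y_2)\pmod p$, and the slots of any two known keys determine $(a,b)$ completely.

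This gives a distinguisher. It knows the honest graph and the public re-indexing, so it can pick a round $i$ in which three seen neighbors of $i$ have IDs in arithmetic progression (easy to arrange). It then tests all triples among the $\hat d_i$ slots opened on $\Ts$. The test succeeds with probability $1$ in the real execution, but only with probability $O(\hat d_i^3/p)$ against uniformly sampled simulated slots.

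The problem is worse for $\Tu$. Its coefficients are reused for a whole reset period, so its opened slots accumulate evaluations of a single affine map on many keys from the public ID range.

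Distinctness of the queried keys gives uniform joint slots only if the hash acts as a random function on them. The argument therefore needs, for example, a fresh PRF key per table, or a $k$-wise independent family with $k$ at least the number of lookups under one key. The indistinguishability then becomes computational rather than information-theoretic.
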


\noindent\textbf{Complexity Analysis.}\label{sec:complexity}
\update{Since all loop bounds in \sysname{} are determined by the public noisy degrees, its work decomposes into (i) access generation, (ii) cycle detection, and (iii) state maintenance. For sparse graphs\footnote{For sparse graphs where $\sum_i\hat d_i=O(n)$. For general graphs, request generation contributes an additional $O(n\sum_i\hat d_i)$ term.}, access generation and state maintenance each cost $O(n^2)$. Detection costs $\Theta(\sum_i \hat d_i^3)$ for triangle and $\Theta(\sum_i \hat d_i^2\log\hat d_i)$ for quadrangle detection. Thus, the overall costs are $O(n^2+\sum_i\hat d_i^3)$ for triangles and $O(n^2+\sum_i\hat d_i^2\log\hat d_i)$ for quadrangles. Full derivation in App.~\ref{app:complexity}.}

\section{Differentially Private Cycle Counts}
\label{sec:DP}
For the \emph{counting} variant, we provide an end-to-end edge-DP guarantee. After computing the exact cycle count under MPC, the servers jointly sample Laplace noise within the protocol and add it to the output~\cite{cryptoeprint:2019/823,Fu_2024}. Thus, the only revealed information, the noisy degrees and final noisy count, is protected under edge DP. Releasing noisy degrees also improves accuracy: the global sensitivity of cycle counting depends on the maximum degree $d_{\max}$. Without a DP estimate, one must assume the worst case ($d_{\max}=n$), resulting in excessive noise. In contrast, \sysname{} uses the released noisy degrees to obtain a tighter data-dependent bound, substantially reducing the required noise. The overall privacy loss composes across the degree release and final output.

Our protocol thus fits naturally into \emph{distributed differential privacy}~\cite{BohlerK20,Kerschbaum,CryptE,Bell22,keller2025piquantvarepsilonprivatequantileestimation}. The local model avoids a trusted server but incurs substantially worse accuracy, while the central model offers better utility but assumes a trusted curator. For example, for triangle counting, central DP achieves $\ell_2$ error $O(d_{\max}^2/\epsilon^2)$, whereas local DP incurs
$O!\left(
\frac{e}{\epsilon (e^{\epsilon}-1)^2}
\left(
d_{\max}^3 n
+
\frac{e}{\epsilon^2} d_{\max}^2 n
\right)
\right)$.
Distributed DP bridges this gap using MPC, achieving central-DP accuracy without a trusted server. \sysname{} instantiates this approach for distributed triangle and quadrangle counting.

One subtlety is that end-to-end DP requires verifying that data owners sample noise correctly, since the guarantee is in the central DP model (unlike the per-party local DP guarantee in Sec.~\ref{sec:security}). This can be enforced by zero-knowledge proofs of correct noise sampling, using techniques from~\cite{narayan2015verifiable,biswas2022verifiable,cryptoeprint:2025/851}.

\begin{figure*}[t]
\centering
\includegraphics[width=.23\textwidth]{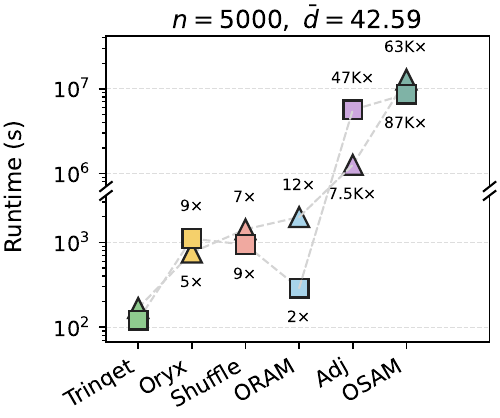}\hfill
\includegraphics[width=.22\textwidth]{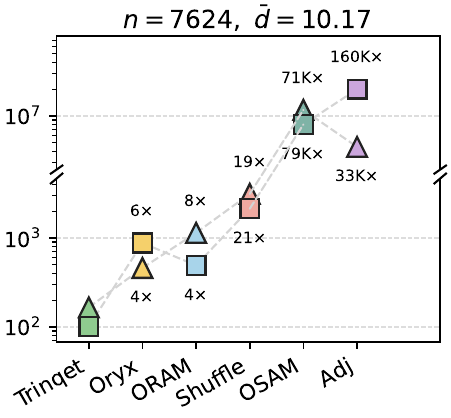}\hfill
\includegraphics[width=.22\textwidth]{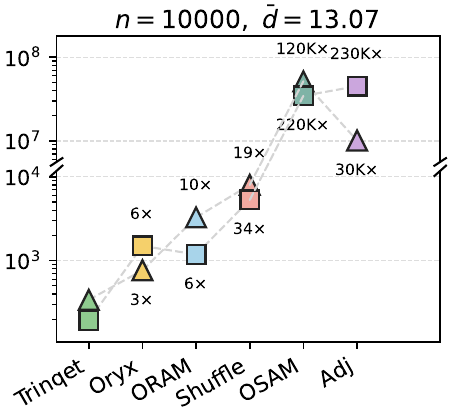}\hfill
\includegraphics[width=.20\textwidth]{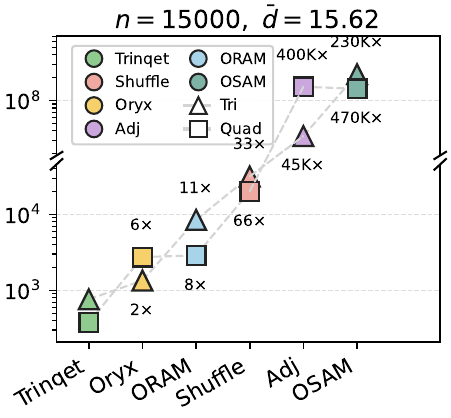}

\includegraphics[width=.23\textwidth]{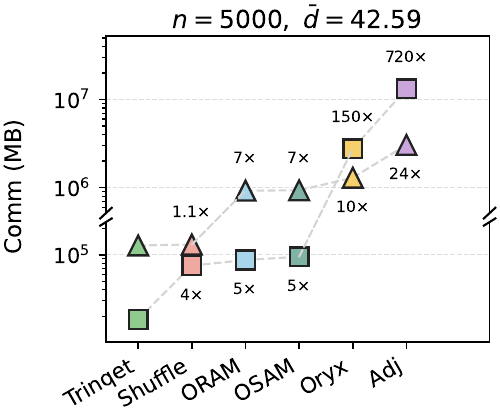}\hfill
\includegraphics[width=.23\textwidth]{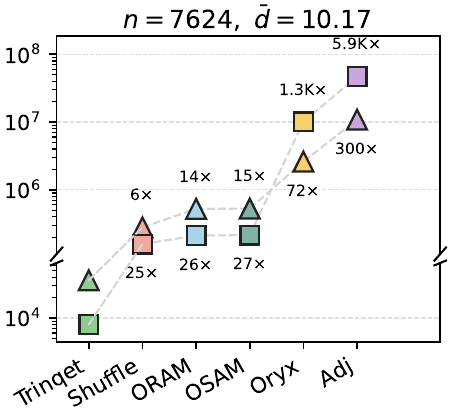}\hfill
\includegraphics[width=.23\textwidth]{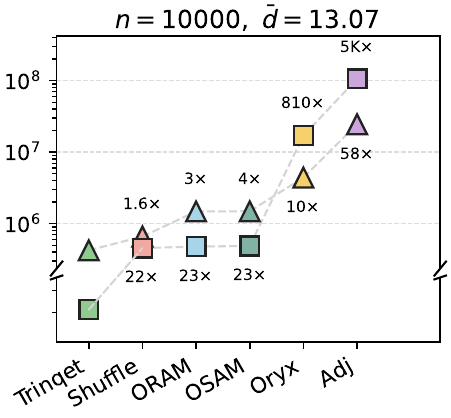}\hfill
\includegraphics[width=.22\textwidth]{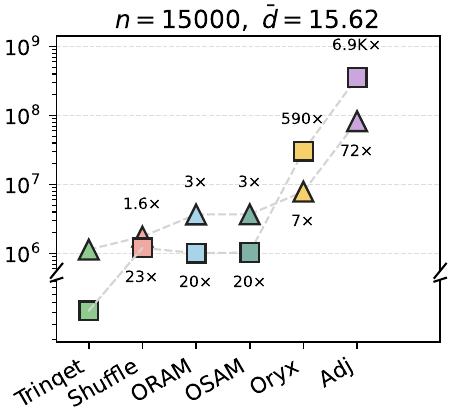}

\caption{Runtime (top row) and total communication (bottom row) of \sysname{} vs.\ baselines on four datasets (columns, left to right). \textit{ORAM}, \textit{OSAM}, and \textit{Oryx}~\cite{zhong2024oryx} operate in a weaker threat model than \sysname{} (semi-honest vs.\ malicious).}
\vspace{-5mm}
\label{fig:baselines}
\end{figure*}

\begin{figure}[t]
\centering
\includegraphics[width=1\linewidth]{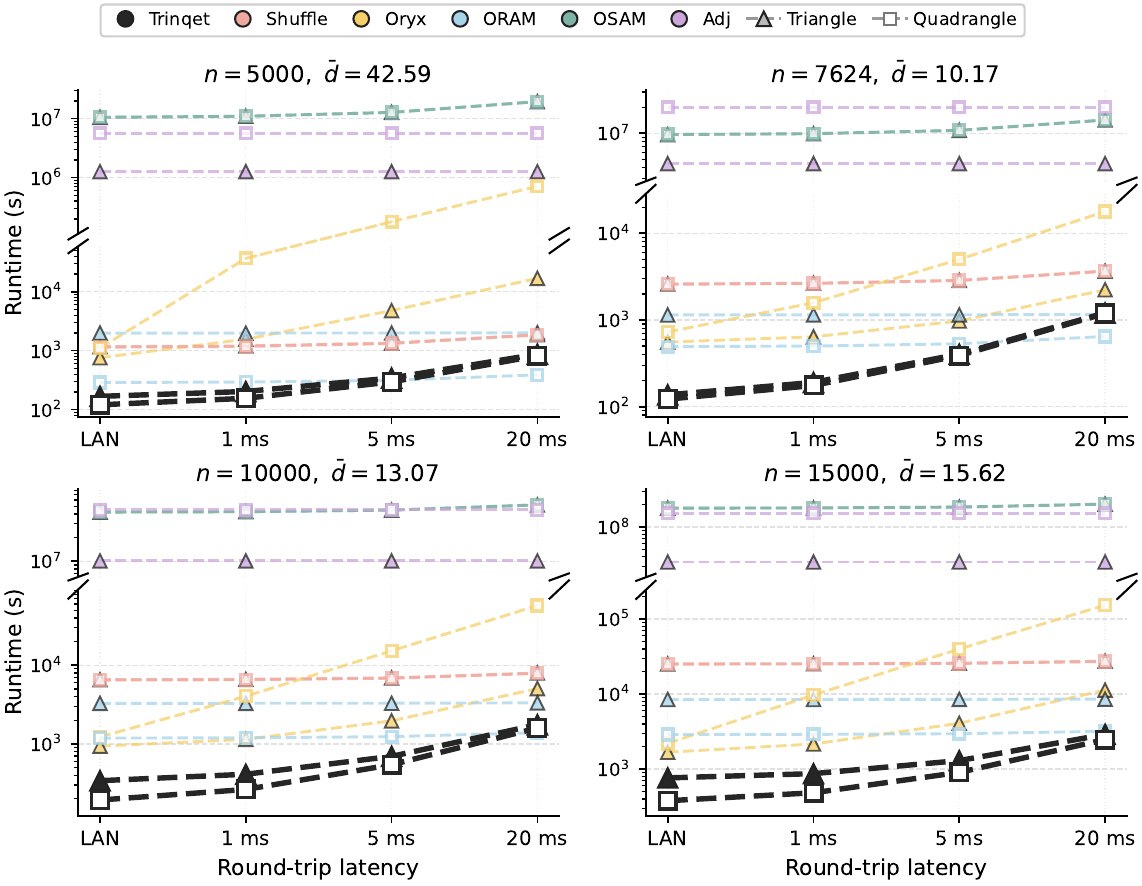}
\caption{Runtime under LAN, 1\,ms, 5\,ms, and 20\,ms RTT. }
\vspace{-5mm}
\label{fig:network-latency}
\end{figure}

\begin{figure}[t]
\centering
\includegraphics[width=.47\linewidth]{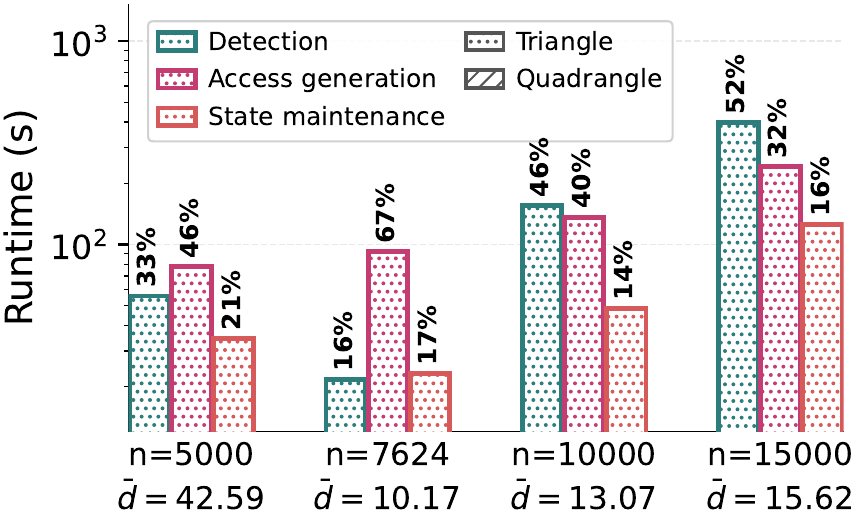}\hfill
\includegraphics[width=.47\linewidth]{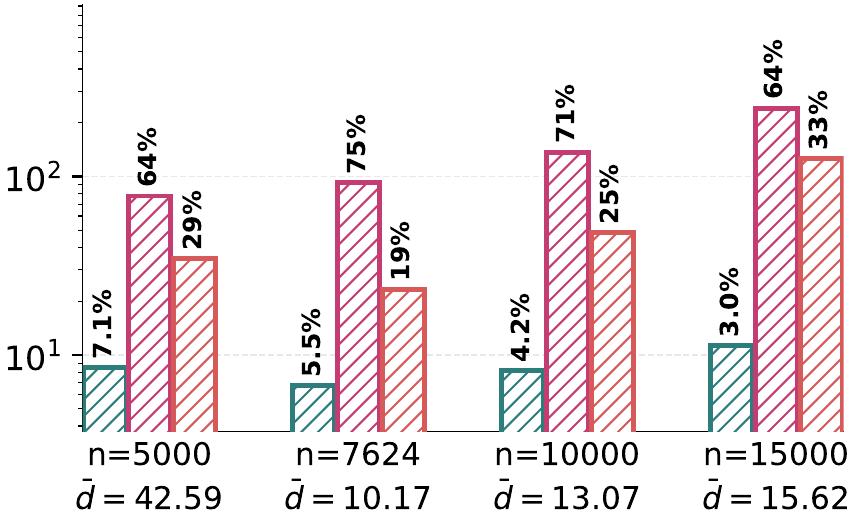}


\includegraphics[width=.47\linewidth]{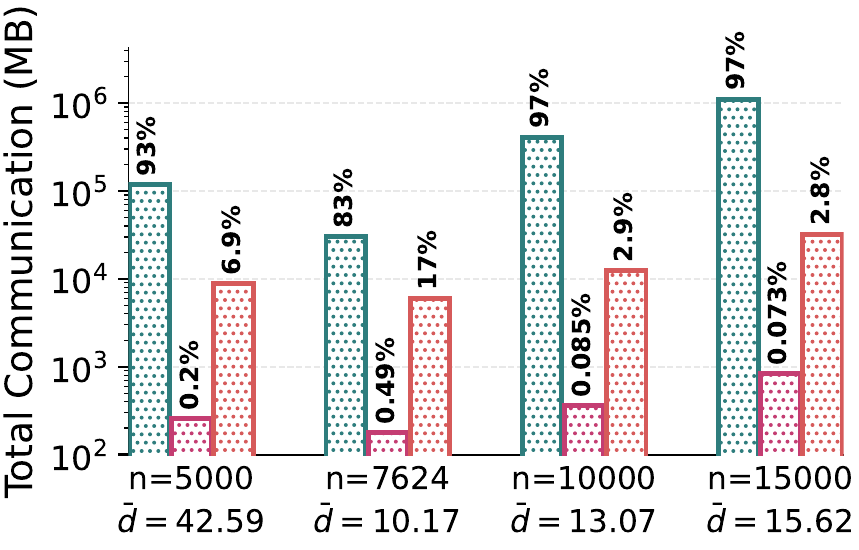}\hfill
\includegraphics[width=.47\linewidth]{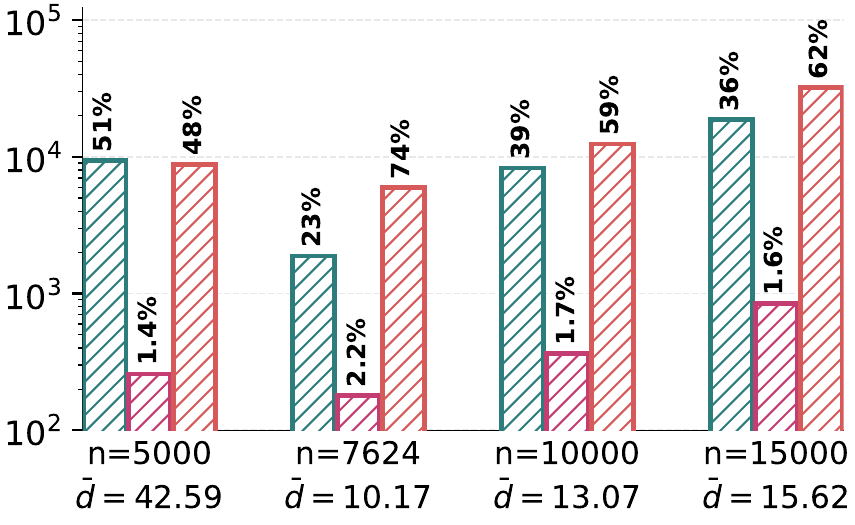}

\caption{Per-component cost breakdown of \sysname\ (\%). The top and bottom rows show runtime and communication; the left and right columns show triangle and quadrangle costs.}
\vspace{-5mm}
\label{fig:breakdown}
\end{figure}

\section{Evaluation}\label{sec:evaluation}
\noindent\textbf{Setup.} We implement \sysname{} in MP-SPDZ~\cite{keller2020mp} using the oblivious shuffle of~\cite{cryptoeprint:2025/2137}, on machines with 48-core Intel Xeon CPUs and 125GB RAM. We set $(\varepsilon,\delta)=(1,10^{-8})$ for noisy degree release and evaluate on RTTs of 1, 5, and 20 ms.



\noindent \textbf{Datasets.}
We evaluate using four undirected graphs and denote them by \Ds1--\Ds4. Additionally, $n$ denotes the number of vertices, $\bar d$ the average degree, and $d_{\max}$ the maximum degree.
Two are real-world datasets: an \textsc{SBM} instance from the Stochastic Block Partitioning Challenge~\cite{graphMIT} ($n{=}5000$, $\bar d{=}42.59$, $d_{\max}{=}174$) and a social network dataset of LastFM users, \textsc{Feather}~\cite{feather} ($n{=}7624$, $\bar d{=}10.17$, $d_{\max}{=}217$).
We also generate two synthetic graphs with $n{=}10000$ ($\bar d{=}13.07$, $d_{\max}{=}281$) and $n{=}15000$ ($\bar d{=}15.62$, $d_{\max}{=}331$); details of the generation approach are given in App.~\ref{app:synthetic_dataset}.

\noindent \textbf{Baselines.} We compare \sysname{} against five baselines.
(1) \textit{Oryx}~\cite{zhong2024oryx} is a cycle-enumeration protocol that operates under a weaker semi-honest threat model than \sysname{} and additionally leaks all $k$-hop paths. It requires three online servers and assumes a bounded maximum degree. Designed for low-degree graphs, it exceeds memory on our higher-degree datasets; in these cases, we run Oryx with the largest feasible tuple budget and extrapolate using its provided cost model.
(2) \textit{Adjacency matrix-based baseline} (\textit{Adj}) constructs an $n\times n$ adjacency matrix, enforces undirected consistency by checking $A_{u,v}=A_{v,u}$, then enumerates all ${n\choose3}$ triples for triangles and all 4-tuples for quadrangles. Unlike Oryx, it matches our malicious threat model, providing a direct comparison.
\update{(3) \textit{ORAM} (semi-honest) follows the same high-level structure as \sysname{}, but replaces our fake-access mechanism with distributed ORAM. Each adjacency list is stored as a fixed $d_{\max}$-word block in a secret-shared block ORAM, instantiated using the DPF-based 3-party DuoRAM~\cite{vadapalli2023duoram}. The protocol requires $\sum_{i=1}^{n}\hat d_i$ block reads, parallelized across cores. Each read retrieves a $d_{\max}$-word record regardless of the vertex's actual degree. In contrast, \sysname{} uses public noisy degrees to operate on $\hat d_i$-length records; \textit{ORAM} therefore pays the global-$d_{\max}$ block cost per access. It also provides only semi-honest security.
(4) \textit{OSAM} uses recent Oblivious Single-Access Machines~\cite{heath20262pc,pia2026oblivious}, which reduce generic ORAM overhead by exploiting single-access memory. Since cycle detection may repeatedly access adjacency lists, our workload does not naturally satisfy this property. Nevertheless, we include this baseline, and use the RAM-over-SAM construction of SAM-2PC~\cite{heath20262pc}, which supports repeated logical accesses over OSAM (OSAM+~\cite{pia2026oblivious} instead targets the client--server setting). SAM-2PC supports only word-level access, with one oblivious tree traversal per 128-bit word and no block-read primitive. Packing two 64-bit vertex IDs per word, each $d_{\max}$-padded list occupies $b=\lceil d_{\max}/2\rceil$ words in a memory of $nb$ words. Thus, retrieving one list requires $b$ separate OSAM accesses rather than one block access. This is a limitation of the available implementation, not the OSAM abstraction. SAM-2PC provides semi-honest security.}
(5) \textit{Shuffle-based baseline} (\textit{Shuffle}) also follows \sysname{}'s high-level structure, but instead of fake accesses, obliviously shuffles all neighbor lists and rebuilds $\Tu$ every round. The last three baselines highlight the benefits of our fake-access design. We select these baselines because they support exact triangle counting; works such as CARGO~\cite{liu2024cargo} instead release noisy counts over projected graphs (App.~\ref{app:related-work}). Nevertheless, CARGO reports 485\,s on the Facebook graph~\cite{ego-facebook}, while \sysname{} achieves substantially lower runtime on much larger graphs.

\noindent \textbf{Performance Comparison.} Fig.~\ref{fig:baselines} reports end-to-end LAN runtime and communication. Most notably, on \Ds2 ($n{=}7624$), \sysname{} takes $<$2.3 minutes for detecting triangles and $<$2.1 minutes for quadrangles. \sysname{} achieves the best runtime across all five baselines, despite operating in a stronger threat model than \textit{ORAM}, \textit{OSAM}, and Oryx. Against the generic memory-hiding approaches, \sysname{} is up to $11.8\times$ (\Ds1) and $7.6\times$ (\Ds4) faster than \textit{ORAM} for triangles and quadrangles respectively, and up to $2.3{\times}10^{5}\times$ and $4.7{\times}10^{5}\times$ faster than \textit{OSAM} (on \Ds4). Among the baselines matching its malicious threat model, it is up to $33\times$ and $66\times$ faster than \textit{Shuffle} (on \Ds4). It is also up to $4.5\times$ and $9.2\times$ faster than Oryx (on \Ds1), even though Oryx is semi-honest and leaks all $k$-hop paths. The gains over the dense \textit{Adj} baseline are larger still, reaching $4.5{\times}10^{4}\times$ for triangles and $4.0{\times}10^{5}\times$ for quadrangles on \Ds4. Communication follows suit: \sysname{} reduces communication across all baselines, by up to $300\times$ for triangles (\Ds2) and $6.9{\times}10^{3}\times$ for quadrangles (\Ds4).

Beyond these results, we observe that (1) unlike \textit{Adj} and \textit{Oryx}, \sysname{} is cheaper for quadrangles than for triangles, which aligns with our complexity bounds since triangle detection scales as $\sum_i \hat d_i^{3}$ while quadrangle detection scales as $\sum_i \hat d_i^{2}\log \hat d_i$ (App.~\ref{app:complexity}); (2) \textit{Shuffle} highlights the higher cost of rebuilding $\Tu$ from scratch every round, whereas \sysname{} rebuilds only the smaller $\Ts$; (3) the relative gains vary predictably with $n$: gains over \textit{Adj} grow with graph size because its work is independent of sparsity; (4) \textit{ORAM}'s triangle runtime is 19.1 minutes (\Ds2), compared with 2.29 minutes for \sysname{}, because every block read pays for $d_{\max}$ entries even though most vertices have much smaller degree. 

\noindent \textbf{Network Latency.} \update{Fig.~\ref{fig:network-latency} evaluates all methods at 1ms, 5ms, and 20ms RTT. At 1\,ms and 5\,ms, \sysname{} achieves the lowest runtime across all four datasets. At 20\,ms, it remains fastest in five of the eight task--dataset combinations; \textit{ORAM} is faster for quadrangles on \Ds1--\Ds3 due to its lower round complexity. However, \textit{ORAM} operates under a weaker threat model, while \sysname{} remains fastest for triangles on the larger graphs (\Ds3 and \Ds4) and for quadrangles on \Ds4. Although \sysname{}'s runtime grows faster with RTT, it remains orders of magnitude faster than all other methods under realistic RTT regimes.}

\noindent \textbf{Cost Breakdown.} Fig.~\ref{fig:breakdown} shows the measured costs of different components of our protocol. Across data collection, setup, and loop execution, the dominant costs arise from cycle detection, access generation, and state maintenance during loop execution; the one-time setup cost is included under state maintenance. The breakdown highlights three observations consistent with our complexity analysis (App.~\ref{app:complexity}). \textbf{(1) Detection and access both drive triangle costs.} Triangle detection accounts for $16$--$52\%$ of runtime, while access generation accounts for $32$--$67\%$, reflecting the significance of both the $O(n^2)$ access cost and the $\Theta(\sum_i\hat d_i^3)$ detection cost. \textbf{(2) Lower quadrangle-detection complexity shifts the bottleneck to access generation.} For quadrangle counting, the lower $\Theta(\sum_i\hat d_i^2\log\hat d_i)$ detection cost accounts for only $3$--$7\%$ of runtime, leaving the $O(n^2)$ access-generation term as the dominant component at $64$--$75\%$. State maintenance remains secondary in runtime for both tasks, consistent with its lower $O(n^2)$ total-work bound. Thus, the contrasting triangle and quadrangle breakdowns empirically reflect their different degree-sensitive detection costs. \textbf{(3) Parallelism reduces the wall-clock impact of detection.} Triangle detection accounts for $83$--$97\%$ of communication despite only $16$--$52\%$ of runtime, showing independent operations parallelize effectively. 

\vspace{-1mm}

\section{Related Work}\label{sec:relatedwork}
A line of prior work~\cite{nayak2015graphsc,kapoor2025mathsf,graphiti,araki2021secure,mazloom2018secure,mazloom2020secure,peng2024mapcomp,zou2025ringsg,zou2024cognn} explores secure graph analytics, but does not target structural computations such as cycle counting. Most closely related to our use of DP leakage is Mazloom et al.~\cite{mazloom2018secure,mazloom2020secure} where servers themselves insert dummy edges which are explicitly routed to every node. This is different from our setting, as emulating this in \sysname{} would require data-owning parties to coordinate and consistently embed dummy edges in their neighbor lists, a prohibitively expensive approach. Our dummy vertices are merely placeholders and require no coordination among data owners. Oryx~\cite{zhong2024oryx} and Vorstermans~\cite{vorstermans2023secure} study private cycle detection under different paradigms: Vorstermans~\cite{vorstermans2023secure} uses costly adjacency-matrix MPC, while Oryx uses 3-server path enumeration under different leakage and threat assumptions.

Another line of work, CARGO~\cite{liu2024cargo} and MAGO~\cite{wang2023mago}, studies triangle counting under distributed DP. CARGO also leverages DP degrees, but for a different purpose: it uses them to \emph{bound sensitivity}, reducing the noise needed to release output. Its computation remains adjacency-matrix based ($O(n^3)$ worst-case). In contrast, \sysname{} uses noisy degrees only to \emph{reduce computation}: loop bounds can scale with graph sparsity. Thus, CARGO releases a noisy count over a projected graph, whereas \sysname{} performs exact cycle detection on the original graph and uses DP leakage for efficiency. MAGO instead uses join-style counting over uniformly padded neighbor views, with cost scaling with a global padded bound rather than per-node degrees. See App.~\ref{app:related-work} for a full classification.


 \section{Conclusion}\label{sec:conclusion}
 We presented \sysname{}, a scalable system for maliciously secure triangle and quadrangle detection over distributed graphs. By combining differentially private noisy degrees with sparsity-aware, access-pattern-hiding MPC, \sysname{} achieves strong privacy with high efficiency. Our evaluation shows that it substantially outperforms prior work, establishing a new design point for private graph analytics.

\section*{Ethical Considerations}
This paper studies privacy-preserving triangle and quadrangle detection on distributed graphs using MPC with bounded differentially private (DP) leakage. Graph edges can encode sensitive relationships, so we consider the ethical impact of both our methodology and publication.

\noindent\textbf{Stakeholders.}
Stakeholders include data-owning parties, end users whose relationships appear as edges, operators deploying secure analytics, organizations whose network structure may be analyzed, and the research community.

\noindent\textbf{Principles.}
We consider the Menlo Report principles of Beneficence, Respect for Persons, Justice, and Respect for Law and Public Interest.

\noindent\textbf{Potential harms.}
A key potential harm is the unintended disclosure of sensitive relational information. Even aggregate statistics such as cycle counts can, in some settings, reveal information about individuals or organizations. Additionally, our system intentionally reveals noisy degree information to achieve efficiency. While this leakage is formally bounded under edge local differential privacy, it nonetheless represents a relaxation of the strongest possible secrecy guarantees.

Another potential concern is dual use. Efficient privacy preserving graph analytics could be deployed in settings where participants may not fully understand or consent to the analysis of their relational data, even if the computation itself is cryptographically secure.

\noindent\textbf{Mitigations.}
We mitigate these risks in several ways. First, all raw graph data remains secret shared and is never revealed to any computing party. Second, the only intentional leakage is through formally defined differentially private noisy degrees, with parameters chosen to provide strong privacy guarantees against inference of individual edges. Third, our protocol is designed under a malicious adversary model, reducing the risk that a deviating party can extract additional information beyond the prescribed output.

We do not collect real world user data in our experiments. All evaluations are performed on synthetic graphs or publicly available datasets that do not contain personally identifiable information.

\noindent\textbf{Decision.}
We believe it is ethical to proceed and publish because the work enables useful graph statistics without centralizing sensitive edges, and because the efficiency leakage is explicit and formally bounded. Publishing also enables scrutiny and responsible reuse, and reduces incentives to deploy less principled approaches with weaker privacy protections.

\bibliographystyle{IEEEtran}
\bibliography{output}

\appendix
\newpage
\section{Appendix}
\subsection{Protocols}
\begin{mybox2}{\algotitle{alg:wfcheck}{$\Pi_{\Proc{WFCheck}}$\\(Protocol for checking well-formedness)}}
\textbf{Parameters:} $n,t$ \\
\textbf{Input:} Secret-shared neighbor lists $\langle \mathcal{L}\rangle=\{\langle l_i\rangle\}_{i\in[n]}$ and public padded lengths $\{\hat d_i\}_{i\in[n]}$ \\
\textbf{Output:} $1/0$ \\
Each server $\Ser_b$:
\begin{enumerate}[nosep, label=\arabic*.]
\item Let $k=\lceil \log_{10} n\rceil$ and $B=10^k$. \Note{Base for canonical edge encoding}

\item \textbf{For} $i \in [n]$:
\begin{enumerate}[nosep, label=\alph*.]
\item Securely verify that the entries in $\langle l_i\rangle$ are strictly increasing.
      \Note{Sortedness}
\item Securely count the number of dummy entries in $\langle l_i\rangle$
      (i.e., values $\Cmp(\langle l_i\rangle, n{+}1)$) and check that this count is at most $2t$.
      \Note{Bounded dummy padding}
\item \textbf{If} either check fails, \textbf{return} $0$.
\end{enumerate}
\item \textbf{EndFor}

\item Sample a random placeholder value $\langle p\rangle$. \Note{Used to collapse all dummy edges}
\item Initialize an empty secret-shared list $\langle \Q\rangle$. \Note{Multiset of canonical edge encodings}

\item \textbf{For} $i \in [n]$:
\item \hspace{0.25cm} \textbf{For} $j \in [\hat d_i]$:
\begin{enumerate}[nosep, label=\roman*.]
\item Let $\langle v\rangle \gets \langle l_i[j]\rangle$.
\item Securely compute $\langle a\rangle \gets \min(i,\langle v\rangle)$ and $\langle b\rangle \gets \max(i,\langle v\rangle)$.
\item Securely compute $\langle g\rangle \gets \Mult(\langle a\rangle,B) + \langle b\rangle$.
      \Note{Canonical encoding}
\item Securely set $\langle g\rangle \gets \Mult(\Cmp(\langle v\rangle,n),\langle g\rangle) + \Mult(1-\Cmp(\langle v\rangle,n),\langle p\rangle)$.
      \NoteNL{Dummy neighbors ($v>n$) map to placeholder $p$; real neighbors keep $g$}
\item Append $\langle g\rangle$ to $\langle \Q\rangle$.
\end{enumerate}
\item \hspace{0.25cm} \textbf{EndFor}
\item \textbf{EndFor}

\item Let $e \gets \sum_{i\in[n]} \hat d_i$ (so $|\Q|=e$).
\item \textbf{If} $\Equal(e \bmod 2, 1)$, append $\langle p\rangle$ to $\langle \Q\rangle$.
      \Note{Ensure even length for pairing}

\item Securely shuffle $\langle \Q\rangle$, then securely sort $\langle \Q\rangle$.

\item \textbf{For} $r \in [|\Q|/2]$:
\item \hspace{0.25cm} Securely check whether $\Equal(\langle \Q[2r-1]\rangle,\langle \Q[2r]\rangle)$.
\item \hspace{0.25cm} \textbf{If} not equal, \textbf{return} $0$.
\item \textbf{EndFor}

\item \textbf{Return} $1$.
\end{enumerate}
\end{mybox2}
\newpage

\begin{mybox2}{\algotitle{alg:initstate}{$\Pi_\Proc{InitializeState}$}}
\textbf{Input:} Outer-loop structure $\langle \To \rangle$, padding length $\hat{d}$, vertices $n$, flag $\Op{isReset}\in\{0,1\}$ \\
\textbf{Output:} Secret-shared $\langle \To \rangle$, $\langle \Seen \rangle$, and $\langle \US\rangle$ \\
Each server $\Ser_b$:
\begin{enumerate}[nosep, label=\arabic*.]
\item \textbf{If} $\Equal(\Op{isReset},0)$: \NoteNL{First call: pad the freshly-set, still-unpadded $\langle \To\rangle$}
\item \hspace{0.25cm} \textbf{For} $i \in [n]$:
\item \hspace{0.5cm} Pad $\langle \To[i]\rangle$ to length $\hat{d}$ using dummy values from $[(n+1),n+\hat{d}]$
\item \hspace{0.25cm} \textbf{EndFor}
\item \hspace{0.25cm} Append $\hat{d}$ dummy lists to $\langle \To\rangle$, where each dummy list is of the form
      $(\langle n\!+\!r\rangle,\; \langle n\!+\!r\rangle,\ldots,\langle n\!+\!r\rangle)$ for $r\in[\hat d]$.
\item \textbf{Else} \NoteNL{Reset: $\langle \To\rangle$ is already padded and complete from setup, reuse it unchanged}
\item \textbf{EndIf}
\item $\langle \mathsf{C}\rangle \gets \Fshuffle(\Op{Copy}(\langle \To\rangle))$ \NoteNL{$\C$ ranges over all (real and dummy) vertex IDs}
\item Initialize $\langle D \rangle \gets \langle \mathsf{C}[0][:] \rangle$ \Note{directory of unseen vertex IDs}
\item $(\langle \Tu \rangle, \langle H_u \rangle) \gets \Pi_\Proc{HashTable}(\langle \mathsf{C} \rangle)$ \Note{build the unseen hash table over $\C$}
\item Initialize $\langle \Op{Pointer}\rangle[\langle D[q]\rangle] \gets q$ for $q\in[n]$ \NoteNL{$\Op{Pointer}$: vertex ID $\to$ its position in $D$, for $O(1)$ deletions}
\item $\langle \US \rangle \gets (\langle \Tu\rangle,\; \langle D \rangle,\; \Fshuffle(\langle D\rangle),\; \cnt_u=n,\;\Op{Pointer})$
\item $\langle \Seen \rangle \gets (\langle \Ts\rangle=\varnothing,\; \langle D_s\rangle=[\,],\; \cnt_s=0)$ \Note{Seen data structure starts empty}
\item \Return $(\langle \To \rangle, \langle \Seen \rangle, \langle \US\rangle)$
\end{enumerate}
\end{mybox2}

\newpage
\begin{mybox2}{\algotitle{alg:update-state}{$\Pi_{\Proc{UpdateState}}$}}
\textbf{Input:}
Seen data structure $\langle \Seen \rangle$,
unseen data structure $\langle \US \rangle$,
unseen access list $\langle \Ru \rangle$, and the current outer-loop vertex $i$
\\
\textbf{Output:}
Updated $\langle \Seen \rangle$ and $\langle \US \rangle$
\\[0.3em]
Each server $\Ser_b$:
\begin{enumerate}[nosep, label=\arabic*.]
\item Let $\langle V_u\rangle \gets \{\langle v_u\rangle : (\langle r_u\rangle,\langle v_u\rangle)\in \langle \Ru\rangle\}$ \NoteNL{Discard the real/fake tags; both are promoted to seen}
\item \textbf{Update seen hash table and directory.}
\item \hspace{0.25cm} $\langle D_s \rangle.\Op{Append}(\langle V_u \rangle \cup \{i\})$
\NoteNL{Add vertices in $\Ru\cup\{i\}$ to the seen directory}
\item \hspace{0.25cm} $\langle \mathsf{C}_s \rangle \gets \Op{Copy}(\langle D_s \rangle)$
\item \hspace{0.25cm} $(\langle \T_s \rangle,\; \langle \mathsf{H}_{s} \rangle)
      \gets \Pi_\Proc{HashTable}(\langle \mathsf{C}_s \rangle)$
\item \hspace{0.25cm} Set $\langle \Ts \rangle \gets (\langle \T_s \rangle,\; \langle \mathsf{H}_{s} \rangle)$
\item \hspace{0.25cm} Set $\langle \cnt_s \rangle \gets \langle \cnt_s \rangle + \hat d_i + 1$

\item \textbf{Update unseen directory.}
\item \hspace{0.25cm} \textbf{For} each $\langle v \rangle \in \langle V_u \rangle \cup \{i\}$:
\item \hspace{0.5cm} $\langle p \rangle \gets \Op{Pointer}[\langle v \rangle]$ \Note{Pointer: current position of $v$ in $D$}
\item \hspace{0.5cm} $\langle w \rangle \gets \langle D[\cnt_u] \rangle$
\item \hspace{0.5cm}  Swap $\langle D[p] \rangle$ and $\langle D[\cnt_u] \rangle$
\item \hspace{0.5cm}  $\Op{Pointer}[\langle w \rangle] \gets \langle p \rangle$
\item \hspace{0.5cm}  $\langle \cnt_u \rangle \gets \langle \cnt_u \rangle - 1$
\item \hspace{0.25cm} \textbf{EndFor}

\item \textbf{Refresh unseen sampling view.}
\item \hspace{0.25cm} Set $\langle D' \rangle \gets \Fshuffle(\langle D[1..\cnt_u] \rangle)$

\item \Return $(\langle \Seen \rangle,\; \langle \US \rangle)$
\end{enumerate}
\end{mybox2}

\newpage
\begin{mybox2}{\algotitle{alg:gen-seen}{$\Pi_{\Proc{GenSeenAccesses}}$}}
\textbf{Input:} $\langle l \rangle$-secret-shared neighbor list;  $\la \Seen \ra$ - seen data structure (for membership test and sampling), $d$ - Number of access requests
\\
\textbf{Output:} $\la \Rs \ra$ - Secret-shared list of $d$ seen-access requests, each of the form $\Rs[j]=(\update{r_s},v_s)$ with $\update{r_s\in\{0,1\}}$ indicating a \emph{real} seen access \\
Each server $\Ser_b$
\begin{enumerate}[nosep, label=\arabic*.]
\item Parse $\Seen$ as $(\langle \Ts\rangle,\langle D_s \rangle,\cnt_s)$; only $D_s$ (seen vertex IDs) and $\cnt_s$ are used here.
\item Securely sample $d$ random \emph{distinct} indices from $[\cnt_s]$;
      denote this secret-shared set of indices by $\langle \mathsf{I}\rangle$
\item Initialize secret-shared table $\la C \ra$ of length $d$,
      with every entry set to $(0, n{+}1)$ \Note{$C[j] = (\Op{SeenBit}_j, \Op{TrueVal}_j)$}
\item \textbf{For} each vertex $\la v\ra \in \la l \ra$
  \item \hspace{0.52cm} Securely compute
          $\langle \Op{SeenBit}_j\rangle
             = \Op{IsInList}(\langle l[j]\rangle, \langle D_s\rangle)$
\NoteNL{Checks if the vertex $l[j]$ is already in the Seen list}
    \item \hspace{0.25cm} Set $\la C[j] \ra = (\langle \Op{SeenBit}_j\rangle, \langle l[j]\rangle)$
    \item \textbf{EndFor}
\item Initialize a secret-shared table $\langle C' \rangle$ of length $d$,
      with every entry set to $(0, n{+}1)$
      \Note{$C'[k] = (\Op{Valid}_k, \Op{FakeVal}_k)$}

\item \textbf{For } $k \in [d]$:
 \item \hspace{0.25cm} Set $\langle \Op{FakeVal}_k \rangle = \langle D_s[\mathsf{I}[k]] \rangle$ \NoteNL{Dereference the sampled index into a candidate seen vertex}
 \item \hspace{0.25cm} Set
          $\langle \Op{Valid}_k \rangle = 1 - \Op{IsInList}(\langle \Op{FakeVal}_k \rangle, \la l \ra)$
          \NoteNL{A fake candidate is invalid if it collides with a real neighbor in $l$}
    \item \hspace{0.25cm} Set $\langle C'[k] \rangle = (\langle \Op{Valid}_k \rangle, \langle \Op{FakeVal}_k\rangle)$.
\item \textbf{EndFor}
\item Securely sort $C$ in descending order by the first component (the seen-bit)
\NoteNL{Real seen vertices first; unseen later.}
\item Securely sort $C'$ in descending order by the first component (the validity-bit) \NoteNL{Valid fake candidates first; invalid later}
\item Initialize a secret-shared table $\langle \Rs \rangle$ of length $d$.

\item \textbf{For } $j \in [d]$:
    \item \hspace{0.25cm} Securely compute $\Rs[j]= \big(\Op{SeenBit}_j,\; \Op{SeenBit}_j?  \Op{TrueVal}_j : \Op{FakeVal}_j\big)$
\NoteNL{Real seen entries keep their true vertex and are tagged $r_s=1$; others use a random fake candidate tagged $r_s=0$.}
          \item \textbf{EndFor}
          \item Securely shuffle $ \la \Rs \ra$ \Note{Shuffle preserves each $(r_s,v_s)$ pairing}
          \item \textbf{Return} $ \la \Rs \ra$
\end{enumerate}
\end{mybox2}

\begin{mybox2}{\algotitle{alg:gen-unseen}{$\Pi_{\Proc{GenUnseenAccesses}}$}}
\textbf{Input:} $\langle l \rangle$ secret shared neighbor list;
$\langle \Seen \rangle$ seen data structure (for membership test);
$\langle \US \rangle$ unseen data structure (for sampling); $d$ number of access requests
\\
\textbf{Output:} $\langle \Ru \rangle$ secret shared list of $d$ unseen access requests, each of the form $\Ru[j]=(\update{r_u},v_u)$ with $\update{r_u\in\{0,1\}}$ indicating a \emph{real} unseen access
\\[0.2em]
Each server $\Ser_b$:
\begin{enumerate}[nosep, label=\arabic*.]
\item Parse $\langle \Seen \rangle$ as $(\langle \Ts\rangle,\langle D_s \rangle,\cnt_s)$; only $D_s$ (seen vertex IDs) is used here.
\item Parse $\langle \US \rangle$ as $(\langle \Tu \rangle,\langle D \rangle,\langle D_u \rangle,\cnt_u,\Op{Pointer})$; only $D_u$ is used here.
\item Securely sample $d$ random \emph{distinct} indices from $\langle D_u \rangle$; denote them by $\langle \mathsf{I}\rangle$.

\item Initialize secret shared table $\langle C \rangle$ of length $d$ with entries $(0,n{+}1)$
\Note{$C[j] = (\Op{UnseenBit}_j,\Op{TrueVal}_j)$}.
\item \textbf{For} $j \in [d]$:
\item \hspace{0.25cm} Securely compute $\langle \Op{SeenBit}_j \rangle \gets \Op{IsInList}(\langle l[j]\rangle,\langle D_s\rangle)$.
\item \hspace{0.25cm} Set $\langle \Op{UnseenBit}_j \rangle \gets 1 - \langle \Op{SeenBit}_j \rangle$.
\item \hspace{0.25cm} Set $\langle C[j]\rangle \gets (\langle \Op{UnseenBit}_j\rangle,\langle l[j]\rangle)$.
\item \textbf{EndFor}

\item Initialize secret shared table $\langle C' \rangle$ of length $d$ with entries $(0,n{+}1)$
\Note{$C'[k] = (\Op{Valid}_k,\Op{FakeVal}_k)$}.
\item \textbf{For} $k \in [d]$:
\item \hspace{0.25cm} Set $\langle \Op{FakeVal}_k \rangle \gets \langle D_u[\mathsf{I}[k]] \rangle$.
\item \hspace{0.25cm} Securely compute $\langle \Op{Collide}_k \rangle \gets \Op{IsInList}(\langle \Op{FakeVal}_k\rangle,\langle l\rangle)$.
\item \hspace{0.25cm} Set $\langle \Op{Valid}_k \rangle \gets 1 - \langle \Op{Collide}_k \rangle$
\Note{invalid if it duplicates a real neighbor}.
\item \hspace{0.25cm} Set $\langle C'[k]\rangle \gets (\langle \Op{Valid}_k\rangle,\langle \Op{FakeVal}_k\rangle)$.
\item \textbf{EndFor}

\item Securely sort $\langle C \rangle$ in descending order by the first component
\Note{real unseen neighbors first}.
\item Securely sort $\langle C' \rangle$ in descending order by the first component
\Note{valid fake unseen first}.

\item Initialize a secret shared table $\langle \Ru \rangle$ of length $d$.
\item \textbf{For} $j \in [d]$:
\item \hspace{0.25cm} Securely compute
$\langle \Ru[j]\rangle \gets \big(\Op{UnseenBit}_j,\; \Op{UnseenBit}_j \;?\; \Op{TrueVal}_j \;:\; \Op{FakeVal}_j\big)$.
\item \textbf{EndFor}
\item Securely shuffle $\langle \Ru \rangle$. \Note{Shuffle preserves each $(r_u,v_u)$ pairing}
\item \Return $\langle \Ru \rangle$.
\end{enumerate}
\end{mybox2}

\newpage
\begin{mybox2}{\algotitle{alg:hashtable}{$\Pi_{\Proc{HashTable}}$}}
\textbf{Parameters:} $p$ (public field modulus for $H(\cdot)$); $x$ (public oversampling factor, chosen large enough to keep collision probability negligible) \\
\textbf{Input:} List $\langle L\rangle$ of length $d$.\\
\textbf{Output:} Hash table $\langle T\rangle$ of size $m$, hash key $\langle H(\cdot)\rangle$.
\\
\begin{enumerate}[nosep, label=\arabic*.]
\item $\langle L'\rangle \gets \Op{Shuffle}(\langle L\rangle)$
\item $m \gets x\cdot d$ \hfill \Note{Table size, oversampled by factor $x$}
\item Securely sample $\langle a\rangle,\langle b\rangle$ uniformly at random in [p]
\item Define $H(\langle y \rangle) \gets \Op{Mod}(\Mult(\langle a \rangle,\langle y \rangle) + \langle b \rangle, p)$
\item Initialize $\langle T[0..m-1]\rangle \gets \emptyset$
\item \textbf{For} $t=0$ \textbf{to} $d-1$:
\begin{enumerate}[nosep, label=\alph*.]
\item $\Op{Insert}(\langle T\rangle, H, \langle L'[t]\rangle)$
\end{enumerate}
\item Return $\langle T\rangle, \langle H(\cdot)\rangle$ \NoteNL{Use $\langle T\rangle$ for any lookups over $\langle L'\rangle$ via the same $H(\cdot)$}.
\end{enumerate}
\end{mybox2}

\begin{mybox2}{\algotitle{alg:psi-enum}{$\Pi_{\Proc{PSI-Enum}}$}}
\textbf{Input:}
two secret-shared padded neighbor lists $\langle l_a\rangle$ and $\langle l_b\rangle$,
each of length $d$ (with dummy IDs in $[n{+}1,\,n{+}d]$),
and an access-validity flag $\langle f\rangle\in\{0,1\}$ from $\Pi_{\Proc{EnumerateTriangles}}$
\\
\textbf{Output:}$\langle \mathcal{X}\rangle$ containing the common neighbors, count of triangles $\langle count \rangle$
\\[0.2em]
Each server $\Ser_b$:
\begin{enumerate}[nosep, label=\arabic*.]
\item Initialize $\langle \mathcal{X}\rangle \gets [\,]$ and $\langle count \rangle \gets 0$.

\item \textbf{For} $t \in [d]$:
\begin{enumerate}[nosep,label=\alph*.]
\item Let $\langle x\rangle \gets \langle l_a[t]\rangle$.
\item \textbf{For} $u \in [d]$:
\begin{enumerate}[nosep,label=\roman*.]
\item $\langle y\rangle \gets \langle l_b[u]\rangle$
\item $\langle eq \rangle \gets \Equal(\langle x\rangle,\langle y\rangle)$
\item $\langle dum \rangle \gets \Cmp(\langle x\rangle,n)$
\item $\langle k \rangle \gets \Mult(\langle f\rangle,\langle eq \rangle,\langle dum \rangle)$
\item $\langle \mathcal{X} \rangle$.Append($\Mult(\langle x\rangle,\langle k \rangle)+\Mult((1-\langle k \rangle),\langle 0\rangle)$)
\item $\langle count \rangle \gets \langle count \rangle + \langle k \rangle$
\end{enumerate}
\item \textbf{EndFor}
\end{enumerate}
\item \textbf{EndFor}
\item Return $\langle \mathcal{X} \rangle, \langle count \rangle$
\end{enumerate}
\end{mybox2}

\newpage

\begin{mybox2}{\algotitle{alg:enum-tri}{$\Pi_{\Proc{EnumerateTriangles}}$}}
\textbf{Input:}
round index $i$,\\ $\langle l_i\rangle$ neighbor list of vertex $i$ of length $\hat d_i$,\\
seen-request list $\langle \Rs\rangle$,\\
unseen-request list $\langle \Ru\rangle$,\\
seen lookup structure $\langle \Ts\rangle$,\\
unseen lookup structure $\langle \Tu\rangle$\\
\textbf{Output:}
Triangle tuples contributed in round $i$: $\langle \Li\rangle$, and count of triangles $\langle count \rangle$
\\[0.2em]
Each server $\Ser_b$:
\begin{enumerate}[nosep, label=\arabic*.]
\item Initialize $\langle \Li\rangle \gets [\,], \langle count \rangle \gets 0$

\item \textbf{For} $j \in [\hat d_i]$:
\begin{enumerate}[nosep,label=\alph*.]

\item \textbf{Fetch accessed vertices.}
\begin{enumerate}[nosep,label=\roman*.]
\item $(\langle r_s\rangle,\langle v_s\rangle) \gets \langle \Rs[j]\rangle$ ; $\langle r_s\rangle\in\{0,1\}$ indicates a \emph{real} seen access.
\item $(\langle r_u\rangle,\langle v_u\rangle) \gets \langle \Ru[j]\rangle$; $\langle r_u\rangle\in\{0,1\}$ indicates a \emph{real} unseen access.
\end{enumerate}

\item \textbf{Lookup neighbor lists.}
\begin{enumerate}[nosep,label=\roman*.]
\item $\langle l_{v_s}\rangle \gets \Op{Lookup}(\langle \Ts\rangle,\langle v_s\rangle)$
\item $\langle l_{v_u}\rangle \gets \Op{Lookup}(\langle \Tu\rangle,\langle v_u\rangle)$
\end{enumerate}

\item \textbf{Compute validity masks.}\NoteNL{checks real or dummy, prevents reverse counting, and suppresses fake accesses}
\begin{enumerate}[nosep,label=\roman*.]
\item $\langle f_s\rangle \gets \Mult(\Cmp(\langle v_s\rangle,n),\\\Cmp(\langle v_s\rangle,i),\langle r_s\rangle)$
\item $\langle f_u\rangle \gets \Mult(\Cmp(\langle v_u\rangle,n),\\\Cmp(\langle v_u\rangle,i),\langle r_u\rangle)$
\end{enumerate}

\item \textbf{Compute intersection and append triangles.}
\begin{enumerate}[nosep,label=\roman*.]
\item $\langle \mathcal{X}_s\rangle, \langle c \rangle \gets \Pi_{\Proc{PSI-Enum}}(\langle l_i\rangle,\langle l_{v_s}\rangle,\langle f_s \rangle$)
\item $\langle count \rangle \gets \langle count \rangle+\langle c \rangle$
\item$\langle \Li\rangle$.Append($i, \langle v_s \rangle, \langle \mathcal{X}_s\rangle$)

\item $\langle \mathcal{X}_u\rangle, \langle c \rangle \gets \Pi_{\Proc{PSI-Enum}}(\langle l_i\rangle,\langle l_{v_u}\rangle,\langle f_u \rangle$)
\item $\langle count \rangle \gets \langle count \rangle+\langle c \rangle$
\item $\langle \Li\rangle$.Append($i, \langle v_u \rangle, \langle \mathcal{X}_u\rangle$)
\end{enumerate}

\end{enumerate}
\item \textbf{EndFor}
\item \Return $(\langle count\rangle,\langle \Li\rangle)$ 
\end{enumerate}
\end{mybox2}

\newpage

\begin{mybox2}{\algotitle{alg:sum-run-bin}{$\Pi_{\Proc{SumRunBinomials}}$: protocol for run aggregation over the sorted list $W'$}}
\textbf{Input:} Sorted list $\langle W' \rangle$ of length $L = 2\hat{d}_i^2$. \\
\textbf{Output:} Quadrangle count, List of quadrangles 
\begin{enumerate}[nosep, label=\arabic*.]
\item Initialize $\langle count \rangle \gets 0, \langle run \rangle \gets 1, \langle \Li \rangle \gets [\,], \langle \mathit{buf} \rangle \gets [\,]$
\NoteNL{$run$: length of the current maximal run of equal $val$'s; $\mathit{buf}$: the $v$'s seen in that run}
\item $\langle (p_{val}, p_{v}) \rangle \gets \langle W'[0] \rangle$ \Note{$p_{val},p_v$: the previous row}
\item $\langle \mathit{buf} \rangle.\text{Append}(\langle p_{v} \rangle)$

\item \textbf{For} $t = 1$ \textbf{to} $L-1$:
\begin{enumerate}[nosep,label=\Alph*.]
    \item $\langle (curr_{val}, curr_{v}) \rangle \gets \langle W'[t] \rangle$ \Note{the current row}
    \item $\langle same \rangle \gets \Equal(\langle curr_{val} \rangle,\langle p_{val} \rangle)$ \Note{current row continues the run}
    \item $\langle flag \rangle \gets \Mult((1 - \langle same \rangle),\Cmp(\langle p_{val} \rangle,n))$ \NoteNL{run just ended on a real (non-dummy) $val$: flush it}
    \item $\langle count \rangle \gets \langle count \rangle + \Mult(\langle flag \rangle,\binom{\langle run \rangle}{2})$
    \item $\langle \Li \rangle.\text{Append}(\Mult(\langle flag \rangle,(\langle p_{val} \rangle, \langle \mathit{buf} \rangle)) + \Mult((1 - \langle flag \rangle),\langle 0 \rangle))$
    \NoteNL{keep the compressed representation $(val,\mathit{buf})$ without materializing the $\binom{|buf|}{2}$ quadrangle tuples.}
    \item $\langle \mathit{buf} \rangle \gets \Mult(\langle same \rangle,\langle \mathit{buf} \rangle) + \Mult((1 - \langle same \rangle),[\,])$
    \item $\langle \mathit{buf} \rangle.\text{Append}(\langle curr_{v} \rangle)$
    \item $\langle run \rangle \gets \Mult(\langle same \rangle,(\langle run \rangle + 1)) + \Mult((1 - \langle same \rangle),1)$
    \item $\langle p_{val} \rangle \gets \langle curr_{val} \rangle$
\end{enumerate}

\item \textbf{Finalize:} 
    \begin{enumerate}[nosep,label=\roman*.]
    \item $\langle f_{final} \rangle \gets \Cmp(\langle p_{val} \rangle,n)$
    \item $\langle \Li \rangle.\text{Append}(\Mult(\langle f_{final} \rangle,(\langle p_{val} \rangle, \langle \mathit{buf} \rangle)) + \Mult((1 - \langle f_{final} \rangle),\langle 0 \rangle))$
    \NoteNL{stored in the same compressed form.}
    \item $\langle count \rangle \gets \langle count \rangle + \Mult(\langle f_{final} \rangle,\binom{\langle run \rangle}{2})$
    \end{enumerate}
\item \Return $(\langle count\rangle, \langle \Li\rangle)$
\end{enumerate}
\end{mybox2}
\newpage

\begin{mybox2}{\algotitle{alg:enum-quad}{$\Pi_{\Proc{EnumerateQuadrangles}}$}}
\textbf{Input:}
round index $i$,\\ $\langle l_i\rangle$ neighbor list of vertex $i$ of length $\hat d_i$,\\
seen-request list $\langle \Rs\rangle$,\\
unseen-request list $\langle \Ru\rangle$,\\
seen lookup structure $\langle \Ts\rangle$,\\
unseen lookup structure $\langle \Tu\rangle$\\
\textbf{Output:}
Quadrangle tuples contributed in round $i$: $\langle \Li\rangle$, and count of quadrangles $\langle count \rangle$
\\[0.2em]
Each server $\Ser_b$:
\begin{enumerate}[nosep, label=\arabic*.]
\item Initialize $\langle \Li\rangle \gets [\,], \langle count \rangle \gets 0$
\item Initialize $\langle W \rangle \gets [\,]$

\item \textbf{For} $j \in [\hat d_i]$:
\begin{enumerate}[nosep,label=\Alph*.]

\item \textbf{Fetch accessed vertices.}
\begin{enumerate}[nosep,label=\roman*.]
\item $(\langle r_s\rangle,\langle v_s\rangle) \gets \langle \Rs[j]\rangle$ ; $\langle r_s\rangle\in\{0,1\}$ indicates a \emph{real} seen access.
\item $(\langle r_u\rangle,\langle v_u\rangle) \gets \langle \Ru[j]\rangle$; $\langle r_u\rangle\in\{0,1\}$ indicates a \emph{real} unseen access.
\end{enumerate}

\item \textbf{Lookup neighbor lists.}
\begin{enumerate}[nosep,label=\roman*.]
\item $\langle l_{v_s}\rangle \gets \Op{Lookup}(\langle \Ts\rangle,\langle v_s\rangle)$
\item $\langle l_{v_u}\rangle \gets \Op{Lookup}(\langle \Tu\rangle,\langle v_u\rangle)$
\end{enumerate}

\item \textbf{Compute validity masks.}\NoteNL{checks real or dummy, prevents multiple counting, and suppresses fake accesses}
\begin{enumerate}[nosep,label=\roman*.]
\item $\langle f_s\rangle \gets \Mult(\Cmp(\langle v_s\rangle,n),\\\Cmp(\langle v_s\rangle,i),\langle r_s\rangle)$
\item $\langle f_u\rangle \gets \Mult(\Cmp(\langle v_u\rangle,n),\\\Cmp(\langle v_u\rangle,i),\langle r_u\rangle)$
\end{enumerate}

\item \textbf{Compute 2-hop lengths and build W.}
{\small
\begin{enumerate}[nosep,label=\roman*.]
\item \textbf{For} $k \in [\hat d_i]$:\NoteNL{Represent each valid 2-hop path compactly as $(val,v)$ rather than materializing pairs of paths into quadrangles.}
\begin{enumerate}[nosep,label=\alph*.]
\item $\langle f_{sk}\rangle \gets \Mult(\Cmp(\langle l_{v_s}[k]\rangle,n),\Cmp(\langle l_{v_s}[k]\rangle,i))$
\item $\langle f_{uk}\rangle \gets \Mult(\Cmp(\langle l_{v_u}[k]\rangle,n),\Cmp(\langle l_{v_u}[k]\rangle,i))$
\item $\langle b_s \rangle \gets \Mult(\langle f_{sk}\rangle,\langle f_s\rangle)$
\item $\langle b_u \rangle \gets \Mult(\langle f_{uk}\rangle,\langle f_u\rangle)$
\item $\langle val \rangle \gets \Mult(\langle b_s \rangle,\langle l_{v_s}[k]\rangle) + \Mult((1-\langle b_s \rangle),(\langle n+1 \rangle))$
\item $\langle w \rangle \gets (\langle val \rangle, \langle v_s \rangle)$
\item $\langle W \rangle$.Append($w$)
\item $\langle val \rangle \gets \Mult(\langle b_u \rangle,\langle l_{v_u}[k]\rangle) + \Mult((1-\langle b_u \rangle),(\langle n+1 \rangle))$
\item $\langle w \rangle \gets (\langle val \rangle, \langle v_u \rangle)$
\item $\langle W \rangle$.Append($w$)
\end{enumerate}
\end{enumerate}}
\end{enumerate}
\item \textbf{EndFor}
\item $\langle W' \rangle \gets$ Sort($\langle W \rangle$) by first item \NoteNL{Sorting groups paths with the same endpoint $val$, allowing each group to remain compressed as $(val,\mathit{buf})$.}
\item $(\langle count\rangle,\langle \Li\rangle) \gets \Pi_{\Proc{SumRunBinomials}}(\langle W' \rangle)$ \NoteNL{Count quadrangles from run sizes while retaining the compressed representation for enumeration.}
\item \Return $(\langle count\rangle,\langle \Li\rangle)$ 
\end{enumerate}
\end{mybox2}

\subsection{Synthetic Dataset Generation}
\label{app:synthetic_dataset}
We generate the two synthetic datasets using an Erd\H{o}s--R\'enyi $G(n,m)$ model \cite{bollobas2011random} with an explicit maximum-degree cap. The Erd\H{o}s--R\'enyi $G(n,m)$ model generates a random undirected graph by selecting exactly $m$ edges uniformly at random from all possible $\binom{n}{2}$ vertex pairs. Given $n$ and target density $\rho$, we set the target number of edges to
$m=\mathrm{round}\!\left(\rho\cdot \frac{n(n-1)}{2}\right)$.
We then sample candidate endpoints $u,v$ uniformly at random from $\{1,\ldots,n\}$ (rejecting self-loops), and add the undirected edge $\{u,v\}$ if it has not appeared before. If a max-degree bound $d_{\max}$ is provided, we reject any candidate edge incident to a vertex whose current degree is already $d_{\max}$.
We repeat until exactly $m$ edges are accepted; if the target is infeasible under the cap (i.e., $m > \lfloor n\cdot d_{\max}/2\rfloor$) or cannot be reached within a fixed attempt budget, the generator aborts.
All synthetic graphs are generated with a fixed random seed for reproducibility.

\subsection{Cheaper Non-Linear Operations using FSS}\label{app:FSS}
\textbf{Function Secret Share (FSS).}
A two-party function secret sharing~\cite{boyle2015function,boyle2016function} scheme enables splitting a function $f$ into succinct function shares such that individual shares provide no information about function $f$, yet when their evaluations are aggregated at any input point $x$, they reconstruct the correct output $f(x)$. A two-party FSS construction consists of the following pair of algorithms:
\begin{itemize}
    \item $K_1, K_2 \leftarrow \text{Gen}(1^\lambda, f)$: Takes as input the security parameter $1^\lambda$ and a function specification $f$, generates and returns two keys $K_1$ and $K_2$.
    \item $y_i \leftarrow \text{Eval}(K_i, x)$: Takes as input a key $K_i$ and an evaluation point $x$, then outputs a value $y_i$ representing the $i$-th party's share of $f(x)$. 
\end{itemize}
The correctness property ensures that combining the evaluation outputs $y_1 + y_2 = f(x)$ for any input $x$.
We focus on distributed point functions (DPFs), which implement FSS for point functions $f_{\alpha,\beta}$ where $f_{\alpha,\beta}(\alpha) = \beta$ and $f_{\alpha,\beta}(\alpha') = 0$ for all $\alpha' \neq \alpha$. We denote the DPF generator algorithm as $\text{Gen}(1^\lambda, \alpha, \beta)$, with evaluation performed using $\text{Eval}$.
For $f_{\alpha,\beta}$,  if $e = \text{Eval}(K_1, x) + (K_2, x)$ for x over the entire function domain $|\B|$, the expanded shares are of the one-hot form: $\mathbf{e} = (0, \ldots, 0, \beta, 0, \ldots, 0)$ 
where  $\beta$ appears at position $\alpha$. The evaluated share for party $i$ at $x$ is then the $i$-th share of this one-hot encoding, i.e., $y_i(x)$ such that $y_1(x) + y_2(x) = f_{\alpha, \beta}(x)$ for all $x$. We use EvalAll to denote a full domain
evaluations which can be performed more efficiently.
\\\\\noindent\textit{Verifiable Function Secret Sharing (VFSS).}
VFSS extends FSS by enabling parties to check that the shares are well-formed. Specifically, in the context of DPFs, it allows the parties to check that the expanded vector is indeed  a valid one-hot encoding. This is typically done with the help of malicious sketching protocols~\cite{wagh2022pika,boneh2021lightweight}.
\\\\\noindent\textbf{Offline Phase.} Let $\B$ denote the domain of the FSS. In the offline phase, we leverage the dealer to generate the FSS keys. The dealer runs $\mathcal{F}_{\textsf{Rand}}$ with $\Ser_0$ and $\Ser_1$ to get $r_0, r_1 \in \mathbb{F}$.  Next, it generates FSS keys for the DPF $f_{r',1}$ where $r'=(r_0+r_1)\bmod \B$. These keys can later be converted to correspond to the DPFs of the actual bids during the online phase. 
We now describe how \sysname{} securely verifies the correctness of these keys under the malicious threat model. Our design follows the general approach proposed by~\cite{wagh2022pika}, which we briefly summarize here for convenience. 
In this setting, two main challenges must be addressed:
\begin{itemize}
\item  \textit{$\Ser_b$ is malicious.} In this case, although the FSS keys are correctly generated by the dealer, the shares obtained by expanding these keys will later be converted into secret shares of the bid encodings during the online phase. Hence, the challenge is to ensure that the servers perform the correct computation over these expanded shares when executing the auction algorithm in the online phase. To address this, we require the expanded values to be authenticated secret-shares, i.e., shares augmented with SPDZ-style MACs (see Section~\ref{sec:background}). Since the DPF encodes a point function, the corresponding MAC shares also form a DPF. As a result, we can meet this requirement by having the dealer generate not only the FSS keys for the DPF $f_{r',1}$, but also additional FSS keys for the MAC shares $f_{r',\alpha}$, where $\alpha$ is a global MAC key.
\item \textit{Dealer is malicious.}  Here, we have to verify that the DPF keys and correlated randomness $r$ generated by the dealer are well-formed. This corresponds to verifiable FSS  and is enforced with the help of a malicious sketching scheme. 
\end{itemize} 
Concretely, in the offline phase the dealer samples  $r \in \mathbb{F}$ and a MAC key $\alpha \in \mathbb{F}$ and generates two pairs of DPF keys as:
\[
(\K_{r'}^0,\K_{r'}^1)\randarrow \text{FSS.gen}(1^\lambda, r', 1)
\]
\[
(\K_{\alpha}^\Br,\K_{\alpha}^\Ex)\randarrow \text{FSS.gen}(1^\lambda, r', \alpha)
\]
\[
r'=r\bmod \B
\]

The dealer then sends to the server $\Ser_b$ the keys $\{\K_{r}^b, \K_{\alpha}^b\}$, along with the linear secret shares $\{[r]_b,[\alpha]_b\}$. 
Next, servers engage in a malicious sketching protocol which verifies the following properties:
\begin{itemize}[itemsep=1pt]
\item $\K^{b}_{r'}, \K^{b}_{\alpha}$ form shares of well formed DPFs
\item $[r]_b$ form shares of  the index of the DPFs (modulo $\B$)
\item The expanded shares $[y]_{b}=\textsf{FSS.evalAll}(\K^b_{r'})$ and $[m_y]_{b}=\textsf{FSS.evalAll}(\K^b_{\alpha})$ satisfy $m_y=\alpha\cdot y$
\end{itemize}

The details of the sketching scheme are provided in below. Instead of using two separate keys, an optimization is encoding both the function and its MAC shares within a single DPF key by extending its payload. This would effectively halve the cost of \textsf{FSS.evalAll}. 
\subsubsection{Sketching Scheme}
\label{app:sketch}

To ensure malicious security against the ill-formed FSS keys, we employ a sketching scheme to verify the correctness of these keys. Over a DPF lookup of size $d$, the servers evaluate vectors $y$ and $m_y$, of length $d$. The scheme has two primary objectives: verification should always succeed for well-formed keys, and malformed vectors $y$ or $m_y$ should be accepted with only negligible probability. The input shares are constructed over a finite field $\mathbb{F}$.
We employ the following linear sketch matrix $L \in \mathbb{Z}_{F}^{4 \times d}$:

\[L = \begin{pmatrix}
a_{1,1} & a_{1,2} & \cdots & a_{1,d} \\
a_{2,1} & a_{2,2} & \cdots & a_{2,d} \\
a_{3,1} & a_{3,2} & \cdots & a_{3,d} \\
a_{4,1} & a_{4,2} & \cdots & a_{4,d}
\label{eq:sketch}\end{pmatrix}\]

For each $i \in [d]$, the entries $a_{1,i}, a_{2,i}$ are sampled uniformly from $\mathbb{F}$ (in implementation, servers share PRG seeds to generate these values), $a_{3,i} = a_{1,i} \cdot a_{2,i} \pmod{\mathbb{F}}$, and $a_{4,i} = i-1$.
The intuition behind this construction is clarified in Eq. \ref{eq:sketching_zero}. This matrix requires generation only once and can be reused throughout the protocol unless a party aborts. Let $L_1, L_2, L_3, L_4$ represent the rows of $L$.

\textbf{Verification.} Define $z_j = \langle L_j, y \rangle \in \mathbb{F}$ for $j = 1, 2, 3, 4$. Furthermore, let $z^* = \langle L_1, m_y \rangle \in F$. Note that each party only holds a share of these values, i.e., $y$ (and thus each $z_j$) is secret shared over $F$ between the servers.
The parties then evaluate the following expression (within MPC) and check if the value (as an element of $\mathbb{F}$) is equal to 0: 
\begin{align}
(z_1 z_2 - z_3) + (z_4 - r\bmod \B) &\stackrel{?}{=} 0    \label{eq:verification_a}\\
(\alpha z_1 - z^*) &\stackrel{?}{=} 0\label{eq:verification_b}
\end{align}
The browser and exchange run this verification over MPC and use this verification to check correct/incorrect keys.

\textbf{Completeness.} The completeness of this sketching scheme is pretty intuitive. For honestly generated correlated randomness, $(z_4 - r\bmod \B)$ by the construction of the DPF scheme equals zero (the DPF outputs 1 at $r'$ where $r'=r\bmod \B$). At the same time, since $y$ is of hamming weight one, the first term is zero:
\begin{equation}
\label{eq:sketching_zero}
    z_1 z_2 - z_3 = \sum_i a_{1,i} a_{2,i} (y_i^2 - y_i) + \sum_{i \neq j} a_{1,i} a_{2,j} (y_i y_j) = 0
\end{equation}

, where the first term is zero because each $y_i \in \{0,1\}$ and the second term is zero because if at most one of $y_i, y_j$ is non-zero for any pair of distinct $i, j$. Thus each term equals zero for Eq.~\ref{eq:verification_a}. Moreover, the verification Eq.~\ref{eq:verification_b} is valid by honest construction.

\textbf{Correctness.}
The correctness of the protocol is easy to follow from the correctness of the DPF:
\[u[i] = y[i+x] \neq 0 \text{ iff } i+x = r'\]

\subsection{Data Structures}\label{app:DS}

\noindent\textbf{Edge-retirement flags: examples.}
Here, we give examples of the function of the edge-retirement flags used in our protocols. Recall that after degree sorting and relabeling, vertices are assigned public IDs $1,\ldots,n$ (Figure~\ref{fig:degree}).

\smallskip
\noindent\emph{Triangles.}
Let $\N(1)=\{2,3,4\}$ and $\N(2)=\{1,4\}$.
When processing $i=1$, our protocol considers only neighbors $j\in\N(1)$ with $j>1$.
For $j=2$, we find $\N(1)\cap \N(2)=\{4\}$ and output $(1,2,4)$.
Later, when processing $i=2$, the edge $(2,1)$ is ignored since it violates the flag $j>i$.
Thus, the same triangle is not rediscovered via the backward orientation $(2,1,4)$ (although it can still be found via another forward edge, e.g., $(1,4)$).

\smallskip
\noindent\emph{Quadrangles.}
For quadrangle detection, the edge-retirement flag is applied to \emph{every} vertex touched during a round, not just to the first edge.
That is, throughout the entire round, the algorithm only allows visiting vertices with larger IDs than the current one.

Consider a 4-cycle on vertices $1,2,3,4$, with edges $(1,2)$, $(2,3)$, $(3,4)$, and $(4,1)$.
When the algorithm processes vertex $1$, it can visit both $2$ and $4$, since they have larger IDs.

From each of these, it can then visit $3$, which is also larger than $1$.
Because $3$ is reached from two different paths, the algorithm recognizes this as a quadrangle and outputs the cycle $(1,2,3,4)$.

When the algorithm later processes vertex $2$, discovering the same cycle would require visiting vertex $1$ again.
This is disallowed because the ordering rule forbids touching any vertex with a smaller ID than the one currently being processed.
The same restriction applies when processing vertices $3$ and $4$.
As a result, the quadrangle is produced only when processing the smallest vertex on the cycle, ensuring that each quadrangle is counted exactly once.

\subsection{Complexity Analysis}
\label{app:complexity}

\noindent\textbf{Overview.}
The work of \sysname{} is governed by the public noisy degrees.
Let $n=|V|$, and let $\hat d_i$ denote the noisy degree of vertex
$\bar v_i$ under the decreasing order $\prec$. In this section, we derive \textit{total} MPC
work. Here, operations that are independent can be parallelized and therefore
reduce wall-clock time, which is discussed in the following section. For secure sorting, we use
the shuffle--then--sort paradigm and model sorting $m$ secret-shared
elements as $O(m\log m)$ work.

Across \sysname{}'s three phases: data collection, setup, and loop
execution, the dominant costs arise during loop execution, which
consists of state maintenance, request generation, and cycle detection.
Which component dominates depends on the graph's degree distribution.
Below, we analyze the contribution of each routine to the overall
complexity.

\noindent\textbf{Data collection.}
Each party locally perturbs the degrees of its vertices and pads each
neighbor list to its published noisy degree before secret sharing.
Hence, the total number of submitted adjacency entries is
\[
    \Theta\!\left(
        \sum_{i=1}^{n}\hat d_i
    \right).
\]

\noindent\textbf{One-time setup.}
The well-formedness checks scan every submitted adjacency entry once,
requiring
\[
    \Theta\!\left(
        \sum_{i=1}^{n}\hat d_i
    \right)
\]
work. For global edge consistency, the servers construct a list $Q$
containing one canonical encoding for every submitted adjacency entry,
so
\[
    |Q|
    =
    \Theta\!\left(
        \sum_{i=1}^{n}\hat d_i
    \right).
\]
The servers then obliviously shuffle and sort $Q$, followed by a
linear scan of the sorted list. The sort therefore dominates the
well-formedness check, giving
\[
    O\!\left(
        \left(\sum_{i=1}^{n}\hat d_i\right)
        \log\!\left(\sum_{i=1}^{n}\hat d_i\right)
    \right)
\]
setup work.

\noindent\textbf{State maintenance.}
In each round, $\hat d_i$ vertices are consumed from the unseen data structures
and promoted to the seen data structures. Within a reset period, after processing
some rounds, suppose $k$ vertices have been consumed in total. The seen
state then contains $k$ vertices, while the unseen directory contains
$n-k$ vertices. Rebuilding $\Ts$ therefore costs $O(k)$, while
shuffling the active unseen directory costs $O(n-k)$ under a
linear-shuffle cost model~\cite{cryptoeprint:2025/2137}. Their combined
cost in that round is
\[
    O(k)+O(n-k)=O(n).
\]
This holds independently of the noisy degrees and of where the round
falls within a reset period: larger seen-state rebuilds are accompanied
by proportionally smaller unseen-directory shuffles. Since the protocol
executes $n$ rounds, the total per-round state-maintenance cost is
therefore
\[
    \sum_{i=1}^{n} O(n)
    =
    O(n^2).
\]

The unseen hash table $\Tu$ is rebuilt only when the remaining unseen
pool is too small to serve the next round. Across all rounds, the total
number of vertices consumed from the unseen pool is
\[
    \sum_{i=1}^{n}\hat d_i.
\]
Each fresh unseen pool contains $n$ vertices, so the number of resets is
\[
    O\!\left(
        \frac{\sum_i\hat d_i}{n}
    \right).
\]
\footnote{A reset may leave some vertices unused, but decreasing
noisy-degree order ensures that each pool is at least half consumed
before resetting. Thus, the exact number of reset periods is
$\Theta(1+\frac{1}{n}\sum_i\hat d_i)$.}
Since rebuilding $\Tu$ processes all $n$ vertices, the total reset cost
under a linear-shuffle model is
\[
    O\!\left(
        n\cdot\frac{\sum_i\hat d_i}{n}
    \right)
    =
    O\!\left(
        \sum_i\hat d_i
    \right).
\]
Since each $\hat d_i=O(n)$, we have
$\sum_i\hat d_i=O(n^2)$, so the reset cost is subsumed by the
$O(n^2)$ per-round state updates. Therefore, with a linear shuffle,
state maintenance requires
\[
    O(n^2)
\]
total work.

\noindent\textbf{Per-round request generation.}
In round $i$, \sysname{} generates exactly $\hat d_i$ seen requests
and $\hat d_i$ unseen requests. The main cost is determining which
neighbors of the current vertex have already been seen. Before round
$i$ can proceed, at least $\hat d_i$ vertices must remain in the unseen
pool; otherwise the state is reset. Hence, at most $n-\hat d_i$
vertices can be in the seen pool. Testing each of the $\hat d_i$
neighbors against this pool therefore requires
\[
    O\!\left(\hat d_i(n-\hat d_i)\right)
\]
FSS comparisons in round $i$. Summed over all rounds, the membership
work is
\[
    O\!\left(
        n\sum_i\hat d_i-\sum_i\hat d_i^2
    \right).
\]

The remaining collision checks contribute
$O(\hat d_i^2)$ work per round, or
$O(\sum_i\hat d_i^2)$ across all rounds. Combining these costs gives
\[
    O\!\left(
        n\sum_i\hat d_i-\sum_i\hat d_i^2
        +\sum_i\hat d_i^2
    \right)
    =
    O\!\left(
        n\sum_i\hat d_i
    \right)
\]
total request-generation work. The FSS comparisons are independent
and parallelized in our implementation, significantly reducing their
wall-clock cost.
\\\noindent\textbf{Triangle enumeration.}
In round $i$, the servers retrieve $2\hat d_i$ neighbor lists, one
seen and one unseen request for each of the $\hat d_i$ access
positions. For each accessed vertex, triangle enumeration intersects its neighbor
list with $l_i$. Both lists have active length $\Theta(\hat d_i)$, so
the MPC-friendly pairwise intersection performs
$\Theta(\hat d_i^2)$ equality checks. Across the
$2\hat d_i$ accessed lists, round $i$ therefore requires
\[
    \Theta\!\left(
        2\hat d_i\cdot\hat d_i^2
    \right)
    =
    \Theta(\hat d_i^3)
\]
work. Summing over all rounds gives
\[
    \boxed{
    T_{\triangle}
    =
    \Theta\!\left(
        \sum_{i=1}^{n}\hat d_i^3
    \right)
    }
\]
for triangle enumeration.

\noindent\textbf{Quadrangle enumeration.}
In round $i$, the protocol processes the two-hop paths induced by the
$2\hat d_i$ retrieved neighbor lists, each of active length
$\hat d_i$. This produces a secret-shared list $W$ of fixed size
\[
    2\hat d_i^2.
\]
The protocol securely sorts $W$ by endpoint and then linearly scans the
sorted list to aggregate equal runs. The scan requires
$\Theta(\hat d_i^2)$ work, while sorting requires
\[
    \Theta\!\left(
        \hat d_i^2\log\hat d_i
    \right),
\]
which dominates the scan. Hence, over all rounds, quadrangle enumeration
requires
\[
    \boxed{
    T_{\square}
    =
    \Theta\!\left(
        \sum_{i=1}^{n}
        \hat d_i^2\log\hat d_i
    \right).
    }
\]

\noindent\textbf{Overall complexity.}
Combining setup, state maintenance, request generation, and cycle
detection, triangle enumeration requires
\[
    O\!\left(
        \left(\sum_{i=1}^{n}\hat d_i\right)
        \log\!\left(\sum_{i=1}^{n}\hat d_i\right)
        +
        n^2
        +
        n\sum_{i=1}^{n}\hat d_i
        +
        \sum_{i=1}^{n}\hat d_i^3
    \right).
\]
Since $\sum_i\hat d_i=O(n^2)$, the setup term is subsumed by
$n\sum_i\hat d_i$. Thus,
\[
    \boxed{
    O\!\left(
        n^2
        +
        n\sum_{i=1}^{n}\hat d_i
        +
        \sum_{i=1}^{n}\hat d_i^3
    \right).
    }
\]
Similarly, quadrangle enumeration requires
\[
    \boxed{
    O\!\left(
        n^2
        +
        n\sum_{i=1}^{n}\hat d_i
        +
        \sum_{i=1}^{n}
        \hat d_i^2\log\hat d_i
    \right).
    }
\]
Thus, the dominant component depends directly on the noisy-degree
distribution: request generation depends on the total noisy degree,
while cycle enumeration depends on higher-order degree terms.

For sparse graphs, where $\sum_i\hat d_i=O(n)$, the request-generation
cost $O(n\sum_i\hat d_i)$ reduces to $O(n^2)$ and is subsumed by the
state-maintenance term. Hence, triangle and quadrangle enumeration
require, respectively,
\[
    O\!\left(
        n^2+\sum_i\hat d_i^3
    \right)
    \quad\text{and}\quad
    O\!\left(
        n^2+\sum_i\hat d_i^2\log\hat d_i
    \right).
\]

\noindent\textbf{Comparison with dense-graph complexity.}
The bounds above show that \sysname{} benefits
directly from graph sparsity. For intuition, consider a graph where
$\hat d_i$ is approximated by the average noisy degree $\bar{\hat d}$. The triangle
complexity then becomes
\[
    O(n^2+n^2\bar{\hat d}+n\bar{\hat d}^{\,3})
    =
    O(n^2\bar{\hat d}+n\bar{\hat d}^{\,3}).
\]
This is better than the $O(n^3)$ dense baseline as long as
$\bar{\hat d}<n^{2/3}$. This covers a broad range of realistic sparse
graphs, where the average degree is typically much smaller than the
number of vertices~\cite{graphbenchmark}. For example,
constant average degree gives $O(n^2)$ complexity, while
$\bar{\hat d}=\sqrt n$ gives $O(n^{5/2})$.

Similarly, under the same approximation, quadrangle enumeration
requires
\[
    O(n^2\bar{\hat d}
      +n\bar{\hat d}^{\,2}\log\bar{\hat d}).
\]
Since $\bar{\hat d}=O(n)$, even at the worst density, \sysname{} costs at most
$O(n^3\log n)$, remaining asymptotically below the $O(n^4)$ dense
baseline. Thus, \sysname{} improves over dense-graph computation across
typical sparse regimes for triangles and across the full degree range
for quadrangles. We also highlight below that
\sysname{} is highly parallelizable.

\subsection{Parallelizations}\label{app:parallel} To improve efficiency, \sysname{} incorporates parallelization wherever it does
not affect security or correctness. Many subroutines operate on different
vertices (or different neighbor pairs) independently once the required
neighbor lists have been retrieved, making them natural targets for
data-parallel execution.

\noindent\textit{(i) Detection.}
Recall from Sec.~\ref{sec:protocol:round} that in round $i$ the servers first
retrieve $2\hat d_i$ neighbor lists (real and fake) needed for processing
vertex~$i$, and then compute the triangle (or quadrangle) contributions
involving~$i$. The key observation is that the \emph{post-retrieval detection
computation} for each round depends only on the lists retrieved in that round.
Therefore, \sysname{} decouples execution into two stages: (1) the oblivious retrievals and recording the $2\hat d_i$ retrieved
lists for each $i$, and (2) computing triangle and quadrangle contributions
from these recorded lists. The latter stage is embarrassingly parallel across
rounds because each vertex $i$ can be processed independently.

Concretely, for triangle counting and enumeration, the work for computing contributions for a fixed
vertex $i$ consists of computing neighborhood intersections between $l_i$ and
each accessed neighbor list. Since lists are padded to length $\hat d_i$, this
requires $\Theta(2\hat d_i \cdot \hat d_i^2)$ pairwise equality checks (two sets of $\hat d_i$ neighbor lists, each intersected with a
length-$\hat d_i$ list). These comparisons are independent and can be further
parallelized within a round, in addition to parallelizing across different
vertices.

For quadrangle counting and enumeration (Sec.~\ref{sec:quadrangle}), the
computation for a fixed vertex $i$ first scans the $2\hat d_i$ retrieved lists
to generate two hop path information. Each two hop path
contributes an entry to the list $W$, which is then securely sorted and scanned
to compute run lengths and add the corresponding $\binom{\ell}{2}$ terms.  These steps are still
independent across vertices $i$, so we can parallelize across rounds, but within a
single round the main coordination points are the run-length aggregation over
the sorted $W$ (which depends on contiguous equal blocks).

In practice, parallelism is bounded by the available CPU cores. We therefore
treat each round $i$ as a job with weight proportional to its expected
comparison count, and schedule jobs across cores using a simple greedy
load-balancing strategy: we sort rounds by decreasing weight and assign each
round to the currently least-loaded worker. This yields near-uniform utilization
in practice and allows \sysname{} to exploit multi-core hardware while
preserving the protocol's oblivious access pattern.

\noindent \textit{(ii) Computing hash.}
Hash evaluations are independent across items, so \sysname{} parallelizes hash
computations across vertices and across elements within each neighbor list.

\noindent \textit{(iii) Oblivious sorting.}
Oblivious sorting is comparison-based; \sysname{} parallelizes the underlying
comparison operations across cores.

\noindent \textit{(iv) Generating requests.}
Seen/unseen request generation consists of sampling indices and computing
membership and collision flags, all of which are independent per element and
therefore parallelized.

\update{\subsection{Ideal Functionalities $\mathcal{F}_{\textsf{Sort}}$ and $\mathcal{F}_{\textsf{Shuffle}}$}\label{app:hybrid-func}
The proof below is stated in the $(\mathcal{F}_{\textsf{Sort}},\mathcal{F}_{\textsf{Shuffle}},\mathcal{F}_{\textsf{Rand}})$-hybrid model. We take $\mathcal{F}_{\textsf{Rand}}$ (an ideal source of shared randomness) as standard; we define the other two below.
\begin{packeditemize}
\item $\mathcal{F}_{\textsf{Sort}}$: on input a secret-shared list $\langle L\rangle$ and a secret-shared key for each entry, outputs a secret-shared list $\langle L'\rangle$ containing the same entries as $\langle L\rangle$, permuted so that the keys are in descending order. Both the input list and the output list are secret-shared throughout; only the fact that a sort occurred, not the resulting permutation, is revealed to the servers.
\item $\mathcal{F}_{\textsf{Shuffle}}$: on input a secret-shared list $\langle L\rangle$ of length $m$, samples a uniformly random permutation $\pi$ of $[m]$ and outputs the secret-shared list $\langle L'\rangle$ with $L'[k]=L[\pi(k)]$ for all $k$, revealing neither $\pi$ nor any information about it to either server.
\end{packeditemize}
Secure realizations of both exist in the literature (e.g., the shuffle protocol of~\cite{cryptoeprint:2025/2137} used in our implementation, Sec.~\ref{sec:evaluation}); by the composition theorem, replacing calls to these protocols with calls to the ideal functionalities does not affect the overall security of \sysname.}

\subsection{Proof of Theorem \ref{thm:security}}\label{app:security}
\begin{proof}
We begin by first providing a construction for our
simulator $\textsf{Sim}$ for the ideal world. We work in the hybrid
model, where invocations of sub-protocols can be replaced
with that of the corresponding functionalities, as long as the
sub-protocol is proven to be secure. We will operate in the
$(\mathcal{F}_{\textsf{Sort}},\mathcal{F}_{\textsf{Shuffle}})$-hybrid model \update{(App.~\ref{app:hybrid-func})}, where we assume the existence of a
secure protocol
which realizes the desired ideal functionalities  $\mathcal{F}_{\textsf{Sort}}$ and $\mathcal{F}_{\textsf{Shuffle}}$.
For the ease of exposition, we start with the case of honest data owning parties. WLOG let us assume $\Ser_1$ is the corrupt server. The simulator sends  random values as the shares of the honest parties to the $\Ser_1$. All the secret-shared messages from $\Ser_0$ to $\Ser_1$ can be simulated using dummy values. Now, let's look at round 1 and see what are the additional messages in the view of $\Ser_1$. In particular, $\Ser_1$ sees $\hat{d}_1$ indices that are accessed in $\Tu$ and corresponding $\hat{d}_1$ indices accessed in $D$ to delete these entries. All of these indices are unique and recall we use a \update{secret-shared hash function $h(x)=ax +b mod p$ (Sec.~\ref{sec:protocol:setup})}. The hash table itself is constructed after secure shuffling the collection of neighbor lists. Hence, the simulator $\textsf{Sim}$ can just sample $\hat{d}_1$ indices at random from $[p]$, where $p$ is the output domain of the hash function and $[\textsf{cnt}_u]$ to simulate them for $\Ser_!$. \update{Since each server sees only its own additive share of the hash function's key $(a,b)$, these simulated distributions are information-theoretically (not merely computationally) indistinguishable from $\Ser_1$'s real view.} From round 2 onwards, in addition to the above indices, $\Ser_1$ also sees $\hat{d}_i$ indices for sampling the candidate seen accesses, indices for accessing $\Ts$ and indices for sampling candidate unseen accesses from $D_u$. Note that all of these indices are unique by our construction. Again, both $D_u$, the list of seen vertices in $\Seen$  are securely shuffled in every round and $\Ts$ is also rebuilt every time. Hence, these indices can also be simulated by choosing $\hat{d}_i$ uniformly random values from $[\textsf{cnt}_u]$,$[\textsf{cnt}_s]$ and [p] respectively.
The final value of either the cycle count or enumerated cycles can be simulated via access to the ideal functionality  to construct consistent values.
The malicious data owners do not receive any message from our protocol.
During the offline phase,   $\textsf{Sim}$ invokes $\mathcal{F}_{\textsf{Rand}}$ to get random values for $[\alpha]_0,[\alpha]_1,[r]_0,[r]_1$. $\textsf{Sim}$ generates the DPF keys honestly and sends a share of $[\alpha r]$ and $\K_r^1, \K_\alpha^1$ to $\calA$.  Using them, $\textsf{Sim}$ proceeds to complete the malicious sketch with $\calA$. If the check fails, $\textsf{Sim}$ sends
abort to the ideal functionality.
For handling a malicious dealer, 
$\textsf{Sim}$ invokes $\mathcal{F}_{\textsf{Rand}}$ to get random values for $\alpha_0,\alpha_1,r_0,r_1,c_0,c_1$. It receives shares of $\alpha r$ and the DPF keys from $\calA$. If the shares are invalid, $\textsf{Sim}$ sends abort to the ideal functionality. 
\end{proof}

\begin{table}
\centering

\begin{tabular}{l p{5cm}}
\toprule
\textbf{MPC Protocol} & \textbf{Output/Functionality}\\
\midrule
$\Mult(\langle x \rangle, \langle y \rangle)$ & $\langle x \cdot y\rangle$\\
$\textsf{Equal}(\langle x \rangle, \langle y \rangle)$ & $\langle 1\rangle$ if $x=y$, $\langle 0\rangle$ otherwise\\
$\Cmp(\langle x \rangle, \langle y \rangle)$ & $\langle 1\rangle$ if $x<y$, $\langle 0\rangle$ otherwise\\
$\textsf{Rand}(y)$ & $\langle r \rangle$ with a uniform random $y$-bit value $r$\\
$\textsf{Mod}(\langle x \rangle ,p)$ & $\langle x \bmod p \rangle$ \\
\bottomrule
\end{tabular}%
\caption{Basic MPC protocols~\cite{aliasgari2012secure} used in \sysname.}
\end{table}

\subsection{Related Work}
\label{app:related-work}
In \autoref{app:tab:related-work}, we compare prior work along task, output, threat model, leakage, DP role, and asymptotic cost.

\begin{table*}[t]
\centering
\caption{Comparison of approaches for private graph analytics and triangle counting. SH denotes semi-honest, Mal denotes malicious threat model.}
\label{app:tab:related-work}
\small
\resizebox{\textwidth}{!}{
\begin{tabular}{
p{0.32\textwidth}
p{0.09\textwidth}
p{0.09\textwidth}
p{0.20\textwidth}
p{0.12\textwidth}
p{0.18\textwidth}
}
\toprule
\textbf{Task / Approach} 
& \textbf{Output} 
& \textbf{Threat} 
& \textbf{Leakage (beyond output)} 
& \textbf{DP Role} 
& \textbf{Asymptotic Cost} \\
\midrule

\textbf{Iterative aggregation (message passing)} \\
\quad \textsc{GraphSC} \cite{nayak2015graphsc}, \textsc{GraSP} \cite{kapoor2025mathsf}, \textsc{SecureGraph} \cite{araki2021secure}, 
\textsc{Graphiti} \cite{graphiti}, \textsc{RingSG} \cite{zou2025ringsg}, \textsc{CogNN} \cite{zou2024cognn}
& Aggregates 
& Mal / SH 
& None 
& None 
& Irrelevant \\
\midrule

\textbf{Iterative aggregation (message passing + DP)} \\
\quad \textsc{Mazloom'18} \cite{mazloom2018secure}, \textsc{Mazloom'20} \cite{mazloom2020secure}
& Aggregates 
& Mal / SH 
& DP-protected access patterns (noisy degrees) 
& Topology leakage 
& Irrelevant \\
\midrule

\textbf{Cycle detection (MPC)} \newline
\hspace*{1.2em}\textsc{Vorstermans}~\cite{vorstermans2023secure} (matrix) \newline
\hspace*{1.2em}\textsc{Oryx}~\cite{zhong2024oryx} (path-based)
& \mbox{} \newline Exact \newline Exact
& \mbox{} \newline SH \newline SH
& \mbox{} \newline None \newline Structural ($k$-hop paths)
& \mbox{} \newline None \newline None
& \mbox{} \newline $O(n^3)$ \newline $O(v \cdot n^{k-1})$ \\
\midrule

\textbf{Triangle counting via joins} \\
\quad \textsc{Shrinkwrap}$^\dagger$~\cite{shrinkwrap}
& Exact
& SH
& DP-protected intermediate sizes
& Efficiency
& $O\!\left(n^2d^3\log(n^2d^3)\right)$ \\
\midrule

\textbf{DP triangle counting} \\
\quad \textsc{CARGO} \cite{liu2024cargo}, \textsc{MAGO} \cite{wang2023mago}
& Noisy 
& SH 
& DP-protected outputs (degrees, counts) 
& Output privacy 
& $O(n^3)$ \\
\midrule

\textbf{Triangle/quadrangle detection} \\
\quad \textbf{\sysname{}}
& \textbf{Exact$^\ast$} 
& \textbf{Mal} 
& \textbf{DP-protected degrees}
& \textbf{Efficiency}
& $O\big(
        n^2
        +
        n\sum_{i=1}^{n}\hat d_i$
        \\
& & & & &
        $+ \sum_{i=1}^{n}\hat d_i^3
    \big)$ \\
\bottomrule
\end{tabular}
}
\vspace{2pt}
{\footnotesize
$^\dagger$\textsc{Shrinkwrap} does not directly target triangle counting,
but can express it via join queries. Its reported complexity assumes a
$d$-regular graph and ignores DP-noise and secure-memory overhead; both
can further increase its cost.\\
$^\ast$\sysname{} can optionally release a noisy triangle count; however,
its exact triangle count release is fully secure and independent of this
(see Sec.~\ref{sec:DP} for details).}
\end{table*}

\subsection{SQL Join Formulation of Triangle and Quadrangle Counting.} \label{app:sql} Shrinkwrap~\cite{shrinkwrap} provides a natural SQL-based comparison to \sysname{}, since triangle and quadrangle counting can be expressed as sequences of joins over the edge relation. However, it is fundamentally less suitable for this setting for two reasons. First, Shrinkwrap provides security against semi-honest adversaries, whereas \sysname{} is maliciously secure. Second, although Shrinkwrap uses differential privacy (DP) to reduce intermediate cardinalities, this reduction occurs only \emph{after} each join has been securely evaluated with exhaustive padding~\cite{shrinkwrap}. Thus, consecutive graph joins still materialize increasingly large intermediate path relations, which must then be obliviously sorted before being resized. In contrast, \sysname{} uses DP-noisy degrees to bound the computation \emph{before} cycle detection, avoiding these global join intermediates altogether. We quantify this computational and privacy difference below.

\noindent\textbf{Triangle counting.} Triangle counting can be written as $\texttt{COUNT}(E(u,v)\Join E(v,w)\Join E(u,w))$. The first join pairs edges sharing $v$ and enumerates all length-2 paths, or wedges. Its true output size is $\Theta(\sum_i d_i^2)$, where $d_i$ is the degree of vertex $i$. However, the edge relation itself contains $\Theta(\sum_i d_i)$ tuples, so Shrinkwrap first evaluates this join into an exhaustively padded output of size $\Theta((\sum_i d_i)^2)$ and only afterward obliviously sorts and resizes that output to its DP-noisy cardinality~\cite{shrinkwrap}. In other words, even if the graph contains relatively few real wedges, the first secure join must initially accommodate every possible pair of edge tuples. The second join then takes the resized wedge relation and joins it again with the full edge relation to test whether each wedge is closed by an edge. Even ignoring DP noise and secure-memory overhead, this requires $\Theta((\sum_i d_i)(\sum_i d_i^2))$ candidate tuple processing. Shrinkwrap additionally sorts each exhaustively padded intermediate before resizing it, which costs $O(N\log N)$ for an intermediate of size $N$~\cite{shrinkwrap}. Thus, DP helps reduce the relation passed forward, but does not avoid the large padded join or its resize cost. A simple example illustrates the difference. In a $d$-regular graph,
$\sum_i d_i=\Theta(nd)$ and $\sum_i d_i^2=\Theta(nd^2)$.
Ignoring DP noise, the second SQL join is therefore exhaustively padded
to $\Theta(n^2d^3)$ tuples. Shrinkwrap must then obliviously sort this
entire padded intermediate before resizing it, adding
$O(n^2d^3\log(n^2d^3))$ work, even before accounting for secure-memory
overhead. With DP resizing, the intermediate can only be larger because
its input cardinality includes nonnegative DP noise. In contrast, \sysname{} avoids constructing a global wedge relation and requires $O(n^2d+nd^3)$ work under the same regular-degree approximation. The key difference is that the SQL formulation repeatedly multiplies a local path count by the size of the global edge relation, whereas \sysname{} keeps cycle detection local to each vertex. 

\noindent\textbf{DP sensitivity.} Shrinkwrap perturbs the output cardinality of each intermediate operator. Its noise therefore depends on the sensitivity of that operator's cardinality, which Shrinkwrap computes recursively from operator stability. In particular, the stability of a join depends on the maximum input multiplicity, so sensitivity can compound across successive joins~\cite{shrinkwrap}. Shrinkwrap's own example shows this effect explicitly: one join increases sensitivity from $1$ to $m$, while the following join increases it to $m^2$. For graph self-joins, these multiplicities are degree dependent; for example, the wedge-producing join has multiplicity bounded by $d_{\max}$. The fixed privacy budget must also be divided across the intermediate releases. In contrast, \sysname{} applies DP to vertex degrees before cycle detection. Under edge-DP, each degree has sensitivity $1$, so the noise in $\hat d_i=d_i+\eta_i$ does not scale with $d_{\max}$. These noisy degrees directly determine the local data sizes and loop bounds used throughout the protocol. 

\noindent\textbf{Quadrangle counting.} Quadrangle counting extends the same idea by one more join: $\texttt{COUNT}(E(u,v)\Join E(v,w)\Join E(w,x)\Join E(x,u))$. After constructing wedges, the next join extends them into length-3 paths, and the final join checks whether each path is closed. Each additional join therefore takes an already enlarged intermediate relation and combines it again with the full edge relation. Under Shrinkwrap, that join is again evaluated with exhaustive padding, followed by an oblivious sort and DP resize before the next operator can use the smaller result~\cite{shrinkwrap}. For intuition, consider again a $d$-regular graph. The number of length-3 paths is $\Theta(nd^3)$, so the final closing-edge join alone processes $\Theta(n^2d^4)$ candidate tuples even without DP noise. By comparison, \sysname{} performs quadrangle counting in $O(n^2d+nd^2\log d)$ under the same regular-degree approximation. Thus, the gap becomes larger for quadrangles: the SQL formulation repeatedly materializes and rejoins global path relations, while \sysname{} operates on degree-bounded local neighborhoods. Overall, Shrinkwrap reduces the padding propagated \emph{between} operators, but it does not remove the exhaustive computation needed to produce each operator's padded output. Its Resize procedure itself obliviously sorts that padded output before truncation~\cite{shrinkwrap}, and its RAM implementation additionally charges oblivious secure-array reads and writes for join processing. For graph cycle counting, these costs compound across consecutive joins. \sysname{} instead releases low-sensitivity noisy degrees before computation and uses them to organize localized cycle detection, avoiding the large global intermediates inherent to the SQL join plan.

\subsection{Extension to Other Tasks}\label{app:extension}
While this paper focuses on triangle and quadrangle detection, two of the most fundamental cycle-based primitives, our techniques extend naturally to a broader class of graph-analytic tasks.

At a high level, \sysname’s design is modular: the outer loop handles oblivious neighbor-list retrieval, while the inner loop performs a cycle-specific detection routine. To support additional patterns, one simply replaces the inner detection logic while retaining the same oblivious retrieval structure. 
More generally, many graph tasks involve vertices or edges annotated with attributes, for instance, timestamps, types, or weights, and cycles should be counted only when these attributes satisfy certain criteria. \sysname{} can support these extensions by incorporating attributes directly into neighbor lists. During the detection phase, the protocol can privately inspect these attributes and filter candidate cycles based on flags in the same way as we do currently.

\end{document}